\newtheorem{thm}{Theorem}[section]
\newaliascnt{lem}{thm}
\newaliascnt{cor}{thm}
\newaliascnt{definition}{thm}
\newaliascnt{rem}{thm}
\newaliascnt{prop}{thm}
\newaliascnt{conj}{thm}
\newaliascnt{prob}{thm}
\newtheorem{lem}[lem]{Lemma}
\newtheorem*{lem*}{Lemma}
\newtheorem{cor}[cor]{Corollary}
\newtheorem*{cor*}{Corollary}
\newtheorem{definition}[definition]{Definition}
\newtheorem*{definition*}{Definition}
\newtheorem{rem}[rem]{Remark}
\newtheorem*{rem*}{Remark}
\newtheorem{prop}[prop]{Proposition}
\newtheorem*{prop*}{Proposition}
\newtheorem{conj}[conj]{Conjecture}
\newtheorem*{conj*}{Conjecture}
\newtheorem{prob}[prob]{Problem}
\newtheorem*{prob*}{Problem}
\newcommand{\N}{\mathbb{N}}
\newcommand{\Z}{\mathbb{Z}}
\newcommand{\R}{\mathbb{R}}
\newcommand{\one}{\mathbf{1}}
\newcommand{\cD}{\mathcal{D}}
\DeclareMathOperator{\vol}{vol}
\DeclareMathOperator{\conv}{conv}
\DeclareMathOperator{\poly}{poly}
\DeclareMathOperator{\inter}{int}
\DeclareMathOperator{\lin}{lin}
\DeclareMathOperator{\GL}{GL}
\newcommand{\norm}[1]{\|#1\|}
\newcommand{\df}{\mathrel{\mathop:}=}
\newcommand{\vc}{\mathcal{V}}
\newcommand{\cC}{\mathcal{C}}
\newcommand{\cF}{\mathcal{F}}
\newcommand{\fD}{\mathfrak{D}}
\numberwithin{equation}{section}
\begin{document}

\title{On compact representations of Voronoi cells of lattices}

\author[C. Hunkenschr\"oder]{Christoph Hunkenschr\"oder}
\address{Institute for Mathematics\\
         \'{E}cole Polytechnique F\'{e}d\'{e}rale de Lausanne\\
         CH-1015 Lausanne\\
         Switzerland}
\email{christoph.hunkenschroder@epfl.ch}

\author[G. Reuland]{Gina Reuland}
\address{Institute for Mathematics\\
         \'{E}cole Polytechnique F\'{e}d\'{e}rale de Lausanne\\
         CH-1015 Lausanne\\
         Switzerland}
\email{ginareuland@gmail.com}

\author[M. Schymura]{Matthias Schymura}
\address{Institute for Mathematics\\
         \'{E}cole Polytechnique F\'{e}d\'{e}rale de Lausanne\\
         CH-1015 Lausanne\\
         Switzerland}
\email{matthias.schymura@epfl.ch}

\thanks{This work was supported by the Swiss National Science Foundation (SNSF) within the project \emph{Convexity, geometry of numbers, and the complexity of integer programming (Nr. 163071)}. The paper grew out of the master thesis of the second author~\cite{reuland2018thesis}; an extended abstract of this work appeared as~\cite{hunkreulschym2019ipco}.}

\begin{abstract}
In a seminal work, Micciancio \& Voulgaris (2013) described a deterministic single-exponential time algorithm for the Closest Vector Problem (CVP) on lattices.
It is based on the computation of the Voronoi cell of the given lattice and thus may need exponential space as well.
We address the major open question whether there exists such an algorithm that requires only polynomial space.

To this end, we define a lattice basis to be $c$-compact if every facet normal of the Voronoi cell is a linear combination of the basis vectors using coefficients that are bounded by $c$ in absolute value.
Given such a basis, we get a polynomial space algorithm for CVP whose running time naturally depends on~$c$.
Thus, our main focus is the behavior of the smallest possible value of~$c$, with the following results:
There always exist $c$-compact bases, where $c$ is bounded by $n^2$ for an $n$-dimensional lattice; there are lattices not admitting a $c$-compact basis with $c$ growing sublinearly with the dimension; and every lattice with a zonotopal Voronoi cell has a $1$-compact basis.
\end{abstract}

\maketitle

\section{Introduction}

An $n$-dimensional lattice is the integral linear span of $n$ linearly independent vectors, $\Lambda = \{ Bz : z \in \Z^n \}$, $B \in \R^{d \times n}$.
If not stated otherwise, we always assume $d=n$, that is, the lattice has full rank.

Two widely investigated and important problems in the Algorithmic Geometry of Numbers,  Cryptography, and Integer Programming are the Shortest Vector Problem and the Closest Vector Problem.
Given a lattice~$\Lambda$, the Shortest Vector Problem (SVP) asks for a shortest non-zero vector in $\Lambda$. For a target vector $t \in \R^n$, the Closest Vector Problem (CVP) asks for a lattice vector~$z^\star$ minimizing the Euclidean distance $\| t-z \|$ from $t$ to a lattice point $z \in \Lambda$.

Let us recall the milestones of the algorithmic development regarding both SVP and CVP.
For a more detailed overview we refer to Hanrot, Pujol \& Stehl{\'e}~\cite{hanrotpujolstehle2011algorithms}, as well as to the more recent Gaussian Sampling approaches, for example, the one by Aggarwal \& Stephens-Davidowitz~\cite{aggarwal2018just}.

In the $1980$'s, Kannan presented algorithms solving SVP and CVP in running time $n^{\mathcal{O}(n)}$ and polynomial space~\cite{kannan1987minkowski}.
Although the constants involved in the running time had been improved, it took roughly fifteen years until a significantly better algorithm was discovered.
In $2001$, Ajtai, Kumar \& Sivakumar~\cite{ajtai2001sieve} gave a randomized algorithm for the Shortest Vector Problem, only taking $2^{\mathcal{O}(n)}$ time.
However, in addition to the randomness, they also had to accept exponential space dependency for their improved running time.
Though their algorithm is not applicable to the Closest Vector Problem in its full generality, they show in a follow-up work that for any fixed $\varepsilon$, it can be used to approximate CVP up to a factor of $(1+ \varepsilon)$ with running time depending on~$1/\varepsilon$~\cite{aks2}.
These authors posed the question whether randomness or exponential space is necessary for a running time better than $n^{\mathcal{O}(n)}$.

It took again around a decade until this question was partially answered by Micciancio \& Voulgaris~\cite{micciancio2013deterministic}, who obtained a deterministic $2^{\mathcal{O}(n)}$ algorithm for both problems.
Their algorithm is based on computing the Voronoi cell~$\vc_\Lambda$ of the lattice, the region of all points at least as close to the origin as to any other lattice point.
But as the Voronoi cell is a polytope with up to $2(2^n - 1)$ facets, the Micciancio-Voulgaris algorithm needs exponential space for storing the Voronoi cell in the worst (and generic) case.
Since storing the Voronoi cell in a different, ``more compact,'' way than by facet-description would lead to a decreased space requirement, they raise the question whether such a representation exists in general.

Our main objective is to propose such a compact representation of the Voronoi cell and to investigate its merits towards a single-exponential time and polynomial space algorithm for the CVP.
As being closer to the origin than to a certain lattice vector~$v$ expresses in the inequality $2 \, x^\intercal v \leq \| v \|^2$, the facets of $\vc_\Lambda$ can be stored as a set $\cF_\Lambda \subseteq \Lambda$ of lattice vectors, which are called the \emph{Voronoi relevant vectors}.
We say that a basis~$B$ of a lattice $\Lambda$ is $c$-\emph{compact}, if each Voronoi relevant vector of $\Lambda$ can be represented in $B$ with coefficients bounded by~$c$ in absolute value.
Hence, by iterating over $(2c + 1)^n$ vectors, we include the set $\cF_\Lambda$.
With $c(\Lambda)$, we denote the smallest $c$ such that there exists a $c$-compact basis of~$\Lambda$.
As a consequence of the ideas in~\cite{micciancio2013deterministic} and our notion of compactness we obtain (cf.~Corollary~\ref{cor:efficientWcompact}):
\begin{enumerate}[\indent $(i)$]
\item Given a $c$-compact basis of a lattice $\Lambda \subseteq \R^n$, we can solve the Closest Vector Problem in $(2c+1)^{\mathcal{O}(n)} \poly (n)$ time and polynomial space.
\end{enumerate}
Thus, the crucial question is: How small can we expect $c(\Lambda)$ to be for an arbitrary lattice?
If $c(\Lambda)$ is constant, then~(i) yields asymptotically the same running time as the initial Micciancio-Voulgaris algorithm, but uses only polynomial space.
Of course, this only holds under the assumption that we know a $c$-compact basis of~$\Lambda$.
This observation has consequences for the variant of CVP with preprocessing, which we discuss in Section~\ref{sec:algorithmic}.

As an example of a large family of lattices, we prove in Section~\ref{ssec:zonotopal-lattices}, that lattices whose Voronoi cell is a zonotope are as compact as possible:
%
\begin{enumerate}[\indent $(i)$]
\setcounter{enumi}{1}
\item If the Voronoi cell of $\Lambda$ is a zonotope, then $c(\Lambda) = 1$.
Moreover, a $1$-compact basis can be found among the Voronoi relevant vectors.
\end{enumerate}

Furthermore, every lattice of rank at most four has a $1$-compact basis (cf.~Corollary~\ref{cor:4-dim-compact}).
However, starting with dimension five there are examples of lattices with $c(\Lambda)>1$, and thus we want to understand how large this compactness constant can be in the worst case.
Motivated by applications in crystallography, the desire for good upper bounds on $c(\Lambda)$ was already implicitly formulated in~\cite{engel1988mathematical,engelmichelsenechal2001newgeometric}, and results of Seysen~\cite{seysen1999ameasure} imply that $c(\Lambda) \in n^{\mathcal{O}(\log n)}$.
We improve this to a polynomial bound and, on the negative side, we show that $c(\Lambda)$ may grow linearly with the dimension (Sections~\ref{ssec:upper-bound} \& \ref{ssec:lower-bound}):
\begin{enumerate}[\indent $(i)$]
\setcounter{enumi}{2}
 \item Every lattice possesses a basis that is $n^2$-compact.
 \item There exists a family of lattices $( \Lambda_n )_{n \geq 5}$ without a $o (n)$-compact basis.
\end{enumerate}

In Section~\ref{sec:relaxed}, we relax the notion of a $c$-compact basis as follows.
Denote by $\bar{c} (\Lambda)$ the smallest constant $\bar{c}$ such that there is \emph{any} square matrix $W$ with
\[
\cF_\Lambda \subseteq \{Wz : z \in \Z^n, \, \| z \|_\infty \leq \bar{c}\}.
\]
Hence, in general, the matrix $W$ generates a superlattice of $\Lambda$.
This relaxation is motivated by the fact that, given a basis, membership to a lattice can be checked in polynomial time.
Thus if $\bar{c} (\Lambda)$ is much smaller than $c(\Lambda)$, this additional check is faster than iterating over a larger set.
Our results regarding the relaxed compactness constant include the following:
\begin{enumerate}[\indent $(i)$]
\setcounter{enumi}{4}
 \item For every lattice $\Lambda$, we have $\bar{c}(\Lambda) \in \mathcal{O} (n \log n)$.
 \item There are lattices $\Lambda \subseteq \R^n$ with $c(\Lambda) \, / \, \bar{c}(\Lambda) \in \Omega(n)$.
\end{enumerate}

In summary, our contribution can be described as follows:
If we are given a $c(\Lambda)$-compact basis of a lattice, then we can modify the algorithm of Micciancio \& Voulgaris to obtain a polynomial space algorithm for CVP.
In whole generality, the time complexity of this algorithm cannot be better than $n^{\mathcal{O}(n)}$, as in Kannan's work.
However, we provide evidence that there are large and interesting classes of lattices, for which this improves to single-exponential time.
We think that it is worth to study the proposed compactness concept further.
In particular, it would be interesting to understand the size of the compactness constant for a generic lattice, and to conceive an efficient algorithm to find a $c$-compact basis.

\section{The notion of a \texorpdfstring{$c$}{c}-compact basis}

Given a lattice $\Lambda \subseteq \R^n$, its \emph{Voronoi cell} is defined by
\[
\vc_\Lambda = \left\{ x \in \R^n : \| x \| \leq \| x - z\| \text{ for all } z\in \Lambda \right\},
\]
where $\|\cdot\|$ denotes the Euclidean norm.
It consists of all points that are at least as close to the origin as to any other lattice point of~$\Lambda$.
The Voronoi cell turns out to be a centrally symmetric polytope having outer description $\vc_\Lambda = \left\{ x \in \R^n : 2 \, x^\intercal z \leq \|z\|^2 \text{ for all } z\in \Lambda \right\}$.
A vector $v \in \Lambda$ is called \emph{weakly Voronoi relevant} if the corresponding inequality $2 \, x^\intercal v \leq \|v\|^2$ defines a supporting hyperplane of $\vc_\Lambda$, and it is called \emph{strictly Voronoi relevant}, or simply \emph{Voronoi relevant}, if it is moreover facet-defining.
Let $\cF_\Lambda$ and $\cC_\Lambda$ be the set of strictly and weakly Voronoi relevant vectors of~$\Lambda$, respectively.
The central definition of this work is the following.

\begin{definition}
Let $\Lambda \subseteq \R^n$ be a lattice and let $c \in \N$.
A basis $B$ of $\Lambda$ is called $c$-\emph{compact}, if
\[
\cF_\Lambda \subseteq \left\{ Bz : z \in \Z^n, \ \| z \|_\infty \leq c \right\}.
\]
That is, each Voronoi relevant vector is a linear combination of the basis vectors with coefficients bounded by $c$ in absolute value.
Moreover, we define
\[
c(\Lambda) = \min\{c \geq 0 : \Lambda \text{ possesses a }c\text{-compact basis}\}
\]
as the \emph{compactness constant} of $\Lambda$.
\end{definition}

%
%
%

As discussed in the introduction, the notion of a $c$-compact basis provides a compact representation of the Voronoi cell $\vc_\Lambda$, the complexity of which depends on the value of the constant~$c$.
Before we set out to study the compactness constant in detail, we offer various equivalent definitions that serve as auxiliary tools and that also provide a better understanding of the underlying concept.

To this end, let $\Lambda^\star = \left\{y \in \R^n : y^\intercal z \in \Z \text{ for all } z \in \Lambda\right\}$ be the \emph{dual lattice} of $\Lambda$, and let $K^\star = \left\{x \in \R^n : x^\intercal y \leq 1 \text{ for all } y \in K\right\}$ be the \emph{polar body} of a compact convex set $K \subseteq \R^n$ containing the origin in its interior.
The basic properties we need are the following:
If $B$ is a basis of~$\Lambda$, then $B^{-\intercal} \df (B^{-1})^\intercal$ is a basis of $\Lambda^\star$, usually called the \emph{dual basis} of~$B$.
For a matrix $A \in \GL_n(\R)$ and a compact convex set $K$ as above, we have $(AK)^\star = A^{-\intercal} K^\star$.
To keep notation short, the convex hull of a finite point set $S \subseteq \R^n$ will be denoted by $\conv(S) = \{\sum_{s \in S} \alpha_s s \mid \forall s \in S: \alpha_s \in \R_{\geq 0}, \, \sum_{s \in S} \alpha_s = 1\}$, and the linear span of $S$ will be denoted by $\lin (S) = \{\sum_{s \in S} \alpha_s s \mid \forall s \in S: \alpha_s \in \R\}$.
We refer to Gruber's textbook~\cite{gruber2007convex} for details and more information on these concepts.

\begin{lem}
\label{lem:primal_dual}
Let $B = \{b_1, \ldots, b_n\}$ be a basis of a lattice $\Lambda \subseteq \R^n$.
The following are equivalent:
\begin{enumerate}[\indent $i)$]
 \item $B$ is $c$-compact,
 \item $c \cdot \conv (\cF_\Lambda)^\star$ contains the dual basis~$B^{-\intercal}$ of $\Lambda^\star$,
 \item writing $B^{-\intercal} = \{ b_1^\star, \ldots, b_n^\star \}$, we have
 \[
 \cF_\Lambda \subseteq \left\{ x \in \Lambda : |x^\intercal b_i^\star| \leq c \text{ for all }1 \leq i \leq n\right\},
 \]
 \item $\cF_\Lambda \subseteq c \, P_B$, where $P_B = \sum_{i=1}^n [-b_i, b_i]$.
\end{enumerate}
\end{lem}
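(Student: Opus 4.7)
The plan is to chase all four conditions through a single observation: for a basis $B = \{b_1,\ldots,b_n\}$ with dual basis $B^{-\intercal} = \{b_1^\star,\ldots,b_n^\star\}$, the coordinates of a vector $x \in \R^n$ in the basis $B$ are exactly the $n$ scalars $(b_i^\star)^\intercal x$, because the columns of $B^{-\intercal}$ are the $b_i^\star$ by definition. With this identity in hand, the equivalences $(i)\Leftrightarrow(iii)\Leftrightarrow(iv)$ are purely linear algebra, and only $(ii)\Leftrightarrow(iii)$ requires a small geometric input.

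For $(i)\Leftrightarrow(iii)$, I would write each $x \in \Lambda$ uniquely as $x = Bz$ with $z = B^{-1}x \in \Z^n$, and observe $z_i = (b_i^\star)^\intercal x$. Hence the condition $\|z\|_\infty \le c$ for every $x \in \cF_\Lambda$ is literally the inequality $|(b_i^\star)^\intercal x| \le c$ for every $x \in \cF_\Lambda$ and every $i$.

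For $(iii)\Leftrightarrow(iv)$, I would recall that the parallelepiped $P_B = \sum_{i=1}^n [-b_i,b_i]$ equals $\{Bw : w \in [-1,1]^n\}$ since the $b_i$ are linearly independent, and therefore $cP_B = \{Bw : \|w\|_\infty \le c\}$. Writing a lattice vector as $x = Bw$ with $w_i = (b_i^\star)^\intercal x$, the inclusion $\cF_\Lambda \subseteq cP_B$ becomes again the bound in~$(iii)$.

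The only mildly non-trivial step is $(ii)\Leftrightarrow(iii)$. Here I would use that $\cF_\Lambda$ is centrally symmetric (as the Voronoi cell itself is), which implies $\cF_\Lambda = -\cF_\Lambda$, so
\[
\conv(\cF_\Lambda)^\star = \{y \in \R^n : |y^\intercal x| \le 1 \text{ for all } x \in \cF_\Lambda\}.
\]
Consequently, $b_i^\star \in c\,\conv(\cF_\Lambda)^\star$ is equivalent to $|(b_i^\star)^\intercal x| \le c$ for every $x \in \cF_\Lambda$, and applying this to $i = 1,\ldots,n$ yields $(ii) \Leftrightarrow (iii)$.

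I do not anticipate any real obstacle. The single point worth being careful about is the passage from a one-sided polar inequality to a two-sided absolute-value inequality, which uses central symmetry of $\cF_\Lambda$; once that is in place, the rest is a book-keeping exercise in dual bases and the coordinate representation $x = Bz$.
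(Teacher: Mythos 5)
Your proposal is correct and follows essentially the same approach as the paper: both rest on the coordinate identity $\alpha_i = (b_i^\star)^\intercal x$ for $x = \sum_i \alpha_i b_i$, and on central symmetry of $\cF_\Lambda$ together with polar duality to handle condition $(ii)$. The only cosmetic difference is that you verify $(ii)$ by testing each $b_i^\star$ against the facet inequalities of $\conv(\cF_\Lambda)^\star$ directly, whereas the paper takes polars of the inclusion $\conv(\cF_\Lambda) \subseteq B[-c,c]^n$ as a whole; these are the same argument phrased locally versus globally.
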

\begin{proof}
$i) \Longleftrightarrow ii)$:
By definition, $B$ is $c$-compact if and only if $\cF_\Lambda \subseteq \{Bz : z \in \Z^n, \| z \|_\infty \leq c \}$.
This means that $Q = \conv (\cF_\Lambda) \subseteq B[-c,c]^n$.
Taking polars, we see that this is equivalent to $B^{-\intercal}\frac{1}{c}C_n^\star \subseteq Q^\star$, where $C_n^\star = \conv\{\pm e_1,\ldots,\pm e_n\}$ is the standard crosspolytope.
Since the columns of $B^{-\intercal}$ constitute a basis of the dual lattice $\Lambda^\star$, the proof is finished.

$i) \Longleftrightarrow iii)$:
$B = \{ b_1, \ldots, b_n \}$ is $c$-compact if and only if the representation $v = \sum_{i=1}^n \alpha_i b_i$ of any Voronoi relevant vector $v \in \cF_\Lambda$ satisfies $|\alpha_i| \leq c$, for all $1 \leq i \leq n$.
By the definition of the dual basis, we have $\alpha_i = v^\intercal b_i^\star$, which proves the claim.

$i) \Longleftrightarrow iv)$:
By definition, $\cF_\Lambda \subseteq c \, P_B$ if and only if for every $v \in \cF_\Lambda$, there are coefficients $\alpha_1, \ldots, \alpha_n \in \R$ such that $v = \sum_{i=1}^n \alpha_i b_i$ and $|\alpha_i| \leq c$.
These coefficients are unique, and since $B$ is a basis of $\Lambda$, they are integral, that is $\alpha_i \in \Z$.
Thus, the inclusion we started with is equivalent to saying that $B$ is $c$-compact.
\end{proof}

Part~iv) of the above lemma shows that the compactness constant $c(\Lambda)$ is the minimum $c$ such that $\cF_\Lambda \subseteq c \, P_B$, for some basis~$B$ of $\Lambda$.
In this definition, the concept has been introduced already by Engel, Michel \& Senechal~\cite{engelmichelsenechal2001newgeometric} together with the variant $\chi(\Lambda)$, where one replaces $\cF_\Lambda$ by the larger set $\cC_\Lambda$ of weakly Voronoi relevant vectors.
Motivated by applications in crystallography, a reoccurring question posed in~\cite{engel1988mathematical,engelmichelsenechal2001newgeometric} is to give good upper bounds on these lattice invariants $c(\Lambda)$ and $\chi(\Lambda)$.
Results of Seysen~\cite{seysen1999ameasure} on simultaneous lattice reduction of the primal and dual lattice imply that
\begin{align}
c(\Lambda) \leq \chi(\Lambda) \in n^{\mathcal{O}(\log n)}. \label{eqn:SeysenBound}
\end{align}
This is however the only bound that we are aware of.


\subsection{A polynomial upper bound}
\label{ssec:upper-bound}

In the sequel, we occassionally need Minkowski's \emph{successive minima} of a convex body $K$ and a lattice $\Lambda$ in $\R^n$.
For $1 \leq i \leq n$, the $i$th successive minimum is defined as
\[
\lambda_i(K,\Lambda) = \min \left\{ \lambda \geq 0 : \lambda K \text{ contains } i \text{ linearly independent points of } \Lambda \right\}.
\]
Minkowski's development of his Geometry of Numbers was centered around the study of these lattice parameters (we refer to Gruber's handbook~\cite{gruber2007convex} for background).
With this notion, Lemma~\ref{lem:primal_dual}~ii) provides a lower bound on the compactness constant of a given lattice.
Indeed, we have
\[
c(\Lambda) \geq \lambda_n(Q^\star,\Lambda^\star),
\]
where $Q = \conv(\cF_\Lambda)$.

Our first result aims for an explicit upper bound on $c(\Lambda)$ only depending on the dimension of the lattice.
To this end, we first need an auxiliary result.

\begin{lem}
\label{lem:MinkowskiBoundForOneDirection}
For a lattice $\Lambda \subseteq \R^n$ with Voronoi cell $\vc_\Lambda$ holds
$\lambda_1(\vc_\Lambda^\star ,\Lambda^\star) \leq \tfrac{2}{\pi} n$.
Hence, there exists a dual lattice vector $y^\star \in \Lambda^\star$ such that
\[
\vc_\Lambda \subseteq \left\{ x \in \R^n : |x^\intercal y^\star| \leq \tfrac{2}{\pi} n \right\}.
\]
\end{lem}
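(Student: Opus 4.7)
The plan is a two-step reduction to a classical transference inequality. First, I note that $\vc_\Lambda$ is contained in the Euclidean ball $\mu(\Lambda)\,B_n$, where $\mu(\Lambda) = \max_{x \in \R^n} \min_{v \in \Lambda} \|x - v\|$ denotes the covering radius of $\Lambda$ and $B_n$ is the Euclidean unit ball: indeed, any $x \in \vc_\Lambda$ satisfies $\|x\| = \min_{v \in \Lambda} \|x - v\| \leq \mu(\Lambda)$. Polarizing this inclusion, and using that $B_n$ is self-dual, yields $\vc_\Lambda^\star \supseteq \mu(\Lambda)^{-1} B_n$, so that
\[
\lambda_1(\vc_\Lambda^\star, \Lambda^\star) \leq \mu(\Lambda) \cdot \lambda_1(B_n, \Lambda^\star) = \mu(\Lambda) \cdot \lambda_1(\Lambda^\star).
\]
The task thus reduces to showing the transference estimate $\mu(\Lambda) \cdot \lambda_1(\Lambda^\star) \leq \tfrac{2}{\pi}\, n$.

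Second, I would invoke Banaszczyk's transference theorem from the Geometry of Numbers, which in fact delivers the sharper bound $\mu(\Lambda) \cdot \lambda_1(\Lambda^\star) \leq n/2$; since $n/2 \leq \tfrac{2}{\pi}\, n$, this completes the first claim. The ``Hence'' consequence follows immediately: letting $y^\star \in \Lambda^\star$ attain $\lambda_1(\vc_\Lambda^\star, \Lambda^\star)$, the containment $y^\star \in \lambda_1(\vc_\Lambda^\star, \Lambda^\star) \cdot \vc_\Lambda^\star$ is equivalent to $\max_{x \in \vc_\Lambda} |x^\intercal y^\star| \leq \lambda_1(\vc_\Lambda^\star, \Lambda^\star) \leq \tfrac{2}{\pi}\, n$, which yields the stated inclusion $\vc_\Lambda \subseteq \{x \in \R^n : |x^\intercal y^\star| \leq \tfrac{2}{\pi}\, n\}$. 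Note that neither factor $\mu(\Lambda)$ nor $\lambda_1(\Lambda^\star)$ admits a universal bound in terms of $\det(\Lambda)^{1/n}$ and $n$ alone (as one sees already for $\Z \times \varepsilon \Z$ with $\varepsilon \to 0$), which is why the transference phenomenon is essential.

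The main obstacle is therefore precisely the transference step: Banaszczyk's bound is proved by a Gaussian-measure argument and is nontrivial. A more elementary alternative would combine Minkowski's first theorem on $(\Lambda^\star, B_n)$ with an estimate for $\mu(\Lambda)$ of the form $\mu(\Lambda) \leq \tfrac{\sqrt{n}}{2}\,\lambda_n(\Lambda)$ (Jarn\'\i k) together with $\lambda_n(\Lambda)\, \lambda_1(\Lambda^\star) \leq n$, but this gives only a product bound of order $n^{3/2}$ rather than $n$. Recovering exactly the stated constant $\tfrac{2}{\pi}\, n$ thus seems to require either Banaszczyk's theorem or a direct Gaussian-summation argument specifically tailored to the pair $(\vc_\Lambda, \Lambda^\star)$.
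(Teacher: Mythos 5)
Your proof is correct, and it takes a genuinely different route from the paper. You bound $\vc_\Lambda$ by the covering-radius ball $\mu(\Lambda)B_n$, polarize, and then invoke Banaszczyk's transference inequality $\mu(\Lambda)\,\lambda_1(\Lambda^\star)\leq n/2$, which in fact yields the slightly stronger conclusion $\lambda_1(\vc_\Lambda^\star,\Lambda^\star)\leq n/2$. The paper's proof goes another way: it applies Minkowski's first theorem to both pairs $(\vc_\Lambda,\Lambda)$ and $(\vc_\Lambda^\star,\Lambda^\star)$, multiplies, inserts Kuperberg's lower bound $\vol(K)\vol(K^\star)\geq\pi^n/n!$ on the Mahler volume product, and then uses the special fact that $\lambda_i(\vc_\Lambda,\Lambda)=2$ for every~$i$ to isolate $\lambda_1(\vc_\Lambda^\star,\Lambda^\star)\leq\tfrac{2}{\pi}n$. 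Each argument leans on one deep ingredient (Banaszczyk's Gaussian-measure transference on your side, Kuperberg's Mahler-volume bound on the paper's); yours produces the better numerical constant, while the paper's has the minor advantage of working directly with the Voronoi cell and its polar rather than passing through the ball, and it avoids the transference machinery. Your closing remark that the more elementary Jarník-plus-Minkowski route only gives $n^{3/2}$ is accurate and correctly identifies why something like Banaszczyk (or, as the paper shows, Kuperberg) is needed.
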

\begin{proof}
By Minkowski's fundamental theorem (cf.~\cite[Ch.~22]{gruber2007convex}), we have
\[
\lambda_1 (\vc_\Lambda, \Lambda) \leq 2 \left( \frac{\det(\Lambda)}{\vol(\vc_\Lambda)} \right)^{\frac{1}{n}}
	\quad \text{ and } \quad
\lambda_1 (\vc_\Lambda^\star, \Lambda^\star) \leq 2 \left( \frac{\det(\Lambda^\star)}{\vol(\vc_\Lambda^\star)} \right)^{\frac{1}{n}}.
\]
Moreover, by a result of Kuperberg~\cite[Cor.~1.6]{Kuperberg2008}, $\vol(K) \vol(K^\star) \geq \pi^n / n!$, for every centrally symmetric convex body $K \subseteq \R^n$.
Therefore,
\[
\lambda_1 (\vc_\Lambda, \Lambda) \lambda_1 (\vc_\Lambda^\star, \Lambda^\star) \leq 4 \left( \frac{\det(\Lambda)\det(\Lambda^\star)}{\vol(\vc_\Lambda) \vol(\vc_\Lambda^\star)} \right)^{\frac{1}{n}} \leq 4 \left( \frac{n!}{\pi^n} \right)^{\frac{1}{n}} \leq \frac{4}{\pi} n,
\]
since $\det(\Lambda)\det(\Lambda^\star) = 1$ (cf.~\cite[Ch.~1]{martinet2003perfect}).
The claimed bound now follows as $\lambda_i(\vc_\Lambda , \Lambda)=2$, for all $1 \leq i \leq n$.
\end{proof}

\begin{thm}
\label{thm:cn-upper-bound}
For every lattice $\Lambda \subseteq \R^n$, there exists an $n^2$-compact basis.
\end{thm}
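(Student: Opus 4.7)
The plan is to induct on the dimension $n$ and, using the duality encoded in Lemma~\ref{lem:primal_dual}, construct a basis $\{y_1^\star,\ldots,y_n^\star\}$ of the dual lattice $\Lambda^\star$ all of whose members lie in $\tfrac{n^2}{2}\vc_\Lambda^\star$. This will suffice because every strictly Voronoi relevant vector $v$ has $v/2 \in \vc_\Lambda$ (the facet it defines touches $\vc_\Lambda$ at $v/2$), so $\cF_\Lambda \subseteq 2\vc_\Lambda$; combined with $y_i^\star \in \tfrac{n^2}{2}\vc_\Lambda^\star$ this gives $|v^\intercal y_i^\star| \leq n^2$ for every $v \in \cF_\Lambda$, which is exactly condition~iii) of Lemma~\ref{lem:primal_dual} for $c = n^2$.

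For the inductive step, I would apply Lemma~\ref{lem:MinkowskiBoundForOneDirection} to produce a primitive vector $y_1^\star \in \Lambda^\star$ with $y_1^\star \in \tfrac{2n}{\pi}\vc_\Lambda^\star$, and then reduce to dimension $n-1$ by the orthogonal projection $\pi_H \colon \R^n \to H := (y_1^\star)^\perp$. The image $\tilde\Lambda := \pi_H(\Lambda^\star)$ is a lattice of rank $n-1$, and its dual inside $H$ is precisely the cross-section $\Lambda' := \Lambda \cap H$. The inductive hypothesis applied to $\Lambda'$ therefore supplies a basis $\tilde y_2^\star,\ldots,\tilde y_n^\star$ of $\tilde\Lambda$ with $\tilde y_i^\star \in \tfrac{(n-1)^2}{2}\vc_{\Lambda'}^\star$. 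The geometric bridge between the two scales is the chain
\[
\vc_{\Lambda'}^\star \,\subseteq\, (\vc_\Lambda \cap H)^\star \,=\, \pi_H(\vc_\Lambda^\star),
\]
where the first inclusion is polar-dual to the obvious $\vc_\Lambda \cap H \subseteq \vc_{\Lambda'}$ (fewer lattice constraints on the right), and the equality is the standard fact that the polar of a hyperplane section equals the orthogonal projection of the polar.

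To lift each $\tilde y_i^\star$ back to $\Lambda^\star$ without losing too much gauge, I would write $\tilde y_i^\star = \pi_H(u_i)$ with $u_i \in \tfrac{(n-1)^2}{2}\vc_\Lambda^\star$; by primitivity of $y_1^\star$, the lifts of $\tilde y_i^\star$ inside $\Lambda^\star$ form a coset of $\Z y_1^\star$ in the fiber $u_i + \R y_1^\star$, so I can choose a lift of the form $y_i^\star = u_i + s y_1^\star \in \Lambda^\star$ with $|s| \leq \tfrac{1}{2}$. The triangle inequality for the $\vc_\Lambda^\star$-gauge then yields
\[
\|y_i^\star\|_{\vc_\Lambda^\star} \,\leq\, \frac{(n-1)^2}{2} + \frac{1}{2}\cdot\frac{2n}{\pi} \,\leq\, \frac{n^2}{2},
\]
and $\{y_1^\star,\ldots,y_n^\star\}$ is indeed a basis of $\Lambda^\star$ since the last $n-1$ vectors project to a basis of the quotient $\Lambda^\star/\Z y_1^\star$. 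The base case $n=1$ is immediate, and the vector $y_1^\star$ itself is accommodated by $\tfrac{2n}{\pi} \leq \tfrac{n^2}{2}$ for $n \geq 2$. The main technical point I expect to require care is the section--projection duality $(\vc_\Lambda \cap H)^\star = \pi_H(\vc_\Lambda^\star)$ together with the bookkeeping verifying that the $\tfrac{n}{\pi}$ loss incurred per inductive level genuinely accumulates to something strictly below the quadratic target $\tfrac{n^2}{2}$; everything else is routine.
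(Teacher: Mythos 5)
Your proof is correct and takes essentially the same route as the paper's: induction on dimension, Lemma~\ref{lem:MinkowskiBoundForOneDirection} to peel off the first dual basis vector $y_1^\star$, orthogonal projection onto $(y_1^\star)^\perp$ paired with the cross-section $\Lambda' = \Lambda \cap (y_1^\star)^\perp$, followed by a lift with fractional coefficient at most $\tfrac12$. The only cosmetic difference is the dual bridge: you invoke $\vc_{\Lambda'}^\star \subseteq (\vc_\Lambda \cap H)^\star = \pi_H(\vc_\Lambda^\star)$ and lift $\tilde y_i^\star$ to a preimage $u_i$ in the polar body, whereas the paper uses the (slightly stronger) inclusion $\vc_\Lambda \subseteq \vc_{\Lambda'} + \lin\{y_1\}$ directly, which makes $y_i'$ itself land in the right dilate of $\vc_\Lambda^\star$ without the intermediate $u_i$ -- both give the identical recursion $\tfrac12(n-1)^2 + \tfrac{n}{\pi} \leq \tfrac12 n^2$.
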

\begin{proof}
We prove by induction on the dimension that there is a basis $D = \{y_1,\dots,y_n\}$ of $\Lambda^\star$ such that
\begin{align}
\label{eq:containment}
\vc_\Lambda &\subseteq \left\{x \in \R^n : \vert x^\intercal y_i \vert \leq \tfrac12 n^2, \ 1 \leq i \leq n \right\}.
\end{align}
Since every Voronoi relevant vector lies in the boundary of $2 \vc_\Lambda$, its inner product with each $y_i$ is then bounded by $n^2$.
Hence, the basis of $\Lambda$ that is dual to $D$ is an $n^2$-compact basis by Lemma~\ref{lem:primal_dual}~iii).

If $n=1$, the containment~\eqref{eq:containment} is trivially true, hence let $n \geq 2$.
Let $y_1$ be a shortest vector of $\Lambda^\star$ with respect to the norm $\| \cdot \|_{\vc_\Lambda^\star}$.
By Lemma~\ref{lem:MinkowskiBoundForOneDirection}, we have
$\vc_\Lambda \subseteq \left\{ x \in \R^n : |x^\intercal y_1| \leq \frac{2 n}{\pi} \right\}$.
Let $\Lambda^\prime = \Lambda \cap \{x \in \R^n : x^\intercal y_1 = 0\}$, and observe that the orthogonal projection $\pi: \R^n \rightarrow \{x \in \R^n : x^\intercal y_1 = 0\}$ fulfills $\pi(\Lambda^\star) = (\Lambda^\prime)^\star$, where we dualize with respect to the linear span of~$\Lambda^\prime$ (cf.~\cite[Ch.~1]{martinet2003perfect}).
By induction hypothesis, there is a basis $D^\prime = \{y^\prime_2,\dots, y^\prime_n\}$ of $(\Lambda^\prime)^\star$, such that
\[
\vc_{\Lambda^\prime} \subseteq \left\{x \in \R^n : x^\intercal y_1 = 0 \text{ and } \vert x^\intercal y^\prime_i \vert \leq \tfrac12 (n-1)^2, \ 2 \leq i \leq n \right\}.
\]
As $\Lambda^\prime \subseteq \Lambda$, we have $\vc_\Lambda \subseteq \vc_{\Lambda^\prime} + \lin\{y_1\}$.
Moreover, as $(\Lambda^\prime)^\star$ is the projection of $\Lambda^\star$ along $y_1$, there exist $\alpha_i \in [-1/2,1/2)$ such that $y_i = y^\prime_i + \alpha_i y_1 \in \Lambda^\star$ for $2 \leq i \leq n$, and $D = \{y_1, \dots, y_n\}$ is a basis of~$\Lambda^\star$.
Hence,
\begin{align*}
\vc_\Lambda &\subseteq \left\{ x \in \R^n : \vert x^\intercal y_1 \vert \leq \tfrac{2n}{\pi}, \ \vert x^\intercal y^\prime_i \vert \leq \tfrac12 (n-1)^2, \ 2 \leq i \leq n \right\} \\
&\subseteq \left\{ x \in \R^n : \vert x^\intercal y_1 \vert \leq \tfrac{2n}{\pi}, \ \vert x^\intercal y_i \vert \leq \tfrac12 (n-1)^2 + \tfrac{n}{\pi}, \ 2 \leq i \leq n \right\}. \\
&\subseteq \left\{ x \in \R^n : \vert x^\intercal y_i \vert \leq \tfrac12 n^2, \ 1 \leq i \leq n \right\},
\end{align*}
finishing the proof.
\end{proof}

\begin{rem}
Since also the weakly Voronoi relevant vectors $\cC_\Lambda$ lie in the boundary of $2\vc_\Lambda$, the basis from the previous proof also shows $\chi(\Lambda) \leq n^2$, for every lattice $\Lambda \subseteq \R^n$ (compare with~\eqref{eqn:SeysenBound}).
\end{rem}

Let us look at the constant $c(\Lambda)$ from a different angle.
A basis of a lattice is particularly nice if each Voronoi relevant vector is a $\{-1,0,1\}$-combination of the basis vectors.
As not every lattice possesses such a basis (see Proposition~\ref{prop:compactness} below), we relaxed the condition on the coefficients and introduced the lattice parameter $c(\Lambda)$, defined for all lattices.
Another way to relax the setting above is not to insist on a basis of $\Lambda$, but rather to look for a generating set $S$ such that each Voronoi relevant vector can be written as a $\{-1,0,1\}$-combination of the vectors in $S$.
In this setting, we are interested in finding a small set $S$. Such an $S$ of order $n \log n$ can be retrieved from an $n^2$-compact basis.

\begin{cor}
For every lattice $\Lambda \subseteq \R^n$ there exists a subset $S \subseteq \Lambda$ of cardinality $\mathcal{O} (n \log n)$ such that
\[
\cF_\Lambda \subseteq \bigg\{ \sum_{s \in S} \sigma_s \, s : \sigma_s \in \{-1,0,1\}, \text{ for } s \in S \bigg\}.
\]
\end{cor}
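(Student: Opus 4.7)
The plan is to bootstrap from Theorem~\ref{thm:cn-upper-bound}: start with an $n^2$-compact basis $B = \{b_1,\dots,b_n\}$, and then replace each basis vector by its dyadic dilates so that any integer coefficient of absolute value at most $n^2$ can be reconstructed as a signed binary expansion of length $\mathcal{O}(\log n)$.

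Concretely, I would set $k = \lceil \log_2(n^2+1) \rceil$ and take
\[
S \df \bigl\{ 2^j b_i : 1 \leq i \leq n, \ 0 \leq j \leq k \bigr\} \subseteq \Lambda.
\]
Then $|S| = n(k+1) \in \mathcal{O}(n \log n)$, as required. Now, let $v \in \cF_\Lambda$. By the choice of $B$, there exist integers $\alpha_1,\dots,\alpha_n$ with $|\alpha_i| \leq n^2$ and $v = \sum_{i=1}^n \alpha_i b_i$. For each $i$, the standard binary representation of $|\alpha_i|$ gives digits $\delta_{i,0},\dots,\delta_{i,k} \in \{0,1\}$ with $|\alpha_i| = \sum_{j=0}^k \delta_{i,j} 2^j$, since $2^{k+1}-1 \geq n^2$. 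Setting $\sigma_{i,j} = \sgn(\alpha_i)\,\delta_{i,j} \in \{-1,0,1\}$ yields $\alpha_i = \sum_{j=0}^k \sigma_{i,j} 2^j$, and hence
\[
v = \sum_{i=1}^n \sum_{j=0}^k \sigma_{i,j}\,(2^j b_i),
\]
which is a $\{-1,0,1\}$-combination of the elements of $S$.

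There is no real obstacle here; the only thing one needs to notice is that a basis alone does not allow $\{-1,0,1\}$-combinations in general (as remarked before Proposition~\ref{prop:compactness}), so the gain in flexibility has to come from enlarging the generating set. The dyadic trick is the natural way to trade a polynomial coefficient range for a logarithmic factor in the cardinality, and it makes the $n \log n$ bound fall out immediately from the $n^2$-compact bound in Theorem~\ref{thm:cn-upper-bound}.
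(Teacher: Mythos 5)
Your proof is correct and is essentially the same as the paper's: both start from the $n^2$-compact basis of Theorem~\ref{thm:cn-upper-bound} and then take dyadic dilates $2^j b_i$ of the basis vectors, using a signed binary expansion of each coefficient to land in $\{-1,0,1\}$. The only difference is that you spell out the sign bookkeeping ($\sigma_{i,j} = \sgn(\alpha_i)\delta_{i,j}$) which the paper leaves implicit.
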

\begin{proof}
By Theorem~\ref{thm:cn-upper-bound}, there exists a $c$-compact basis $B$ of $\Lambda$ with $c \leq n^2$. Let $M \df \lfloor \log_2 c \rfloor$.
Each $0 \leq \alpha \leq c$ can be written as $\alpha = \sum_{j=0}^M 2^j \sigma_j$, for some unique $\sigma_j \in \{0,1\}$.
For each vector $b_i \in B$ and $0 \leq j \leq M$, we define the vector
\[
s_{i,j} \df 2^j b_i.
\]
This gives $\mathcal{O}(n \log_2 (n^2)) = \mathcal{O} (n \log n)$ vectors in total, and clearly every vector $v = \sum_{i=1}^n \alpha_i b_i$ with $\vert \alpha_i \vert \leq c$ can be written as a linear combination of the $s_{i,j}$ using only coefficients in $\{-1,0,1\}$.
\end{proof}

\begin{rem}
With a different method Daniel Dadush (personal communication) proves that the subset~$S$ can be chosen to consist of \emph{Voronoi relevant vectors} itself.
\end{rem}

\subsection{Lattices without sublinearly-compact bases}
\label{ssec:lower-bound}

In this part, we identify an explicit family of lattices whose compactness constant grows at least linearly with the dimension. 
This requires some technical work; the pure existence of such a family also follows from Proposition~\ref{prop:lambda<2}~iii) below.
However, based on the understanding of the lattice discussed in this section, we are able to discriminate between the compactness constant and a relaxed variant, which will be introduced in the next section.

For any $a \in \N$ and $n \in \N$, we define the lattice
\begin{align}
\Lambda_n(a) &= \left\{ z \in \Z^n : \ z_1 \equiv \dots \equiv z_n \!\mod a \right\}, \label{eqn:Lambda_n}
\end{align}
whose dual lattice is given by
\begin{align}
\Lambda_n(a)^\star &= \left\{ z \in \tfrac{1}{a} \Z^n : \one^\intercal z \in \Z \right\}, \label{eqn:dualLambda_n}
\end{align}
where $\one=(1,\ldots,1)^\intercal$ denotes the all-one vector.
The special structure of these lattices allows us to write down the Voronoi relevant vectors explicitly.

\begin{lem}
\label{lem:vor-rel-vectors}
Let $n \in \N_{\geq 4}$, $a = \lceil n/2 \rceil$, and write $\Lambda_n = \Lambda_n(a)$.
Then, a vector $v \in \Lambda_n$ is strictly Voronoi relevant if and only if either $v = \pm \mathbf{1}$, or there exists an index set $\emptyset \neq S \subsetneq \{1,\dots,n\}$ such that
\begin{align}
\label{eq:shape_vectors}
v_i &= \begin{cases} a-\ell & i \in S \\ -\ell & i \notin S \end{cases}, \qquad \text{and } \quad \ell \in \left\{ \left\lfloor \frac{a\vert S \vert}{n} \right\rfloor, \left\lceil \frac{a\vert S \vert}{n} \right\rceil \right\} .
\end{align}
\end{lem}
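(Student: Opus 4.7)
The plan is to apply Voronoi's classical criterion, which characterizes $v \in \cF_{\Lambda_n}$ as those $v$ for which $\pm v$ are the unique minimum-norm elements of the coset $v + 2\Lambda_n$. Substituting $u = v + 2y$ rewrites the condition as
\[
F_v(y) \df v^\intercal y + \|y\|^2 > 0 \quad \text{for all } y \in \Lambda_n \setminus \{0, -v\}.
\]

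For the ``if'' direction the case $v = \pm \mathbf{1}$ is immediate: $F_{\mathbf{1}}(y) = \sum_i y_i(y_i + 1)$ vanishes precisely when every $y_i \in \{0, -1\}$, and $a \geq 2$ combined with the congruence $y_i \equiv y_j \!\!\pmod a$ forces $y \in \{0, -\mathbf{1}\}$; the case $v = -\mathbf{1}$ is analogous. For $v$ of shape $(S, \ell)$, I would parameterise any $y \in \Lambda_n$ as $y = am + t\mathbf{1}$ with $m \in \Z^n$ and $t \in \{0, 1, \dots, a-1\}$, and exploit the split
\[
F_v(y) = \sum_{i \in S} y_i(y_i + a - \ell) + \sum_{i \notin S} y_i(y_i - \ell)
\]
to minimise coordinate-by-coordinate for each fixed $t$. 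Three regimes (depending on the position of $2t$ relative to $\ell$ and $a+\ell$) yield closed forms for $g(t) \df \min_m F_v(am + t\mathbf{1})$ as a quadratic in $t$ with coefficients built from $n, a, s, \ell$. A direct check then shows that $\ell \in \{\lfloor as/n \rfloor, \lceil as/n \rceil\}$ is exactly the range for which $g(t) \geq 0$ on $\{0, 1, \dots, a-1\}$, with zeros only at $t = 0$ (returning $y = 0$) and $t = \ell$ (returning $y = -v$); any other choice of $\ell$ would produce a spurious integer zero close to $t = as/n$.

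For the ``only if'' direction, assume $v$ is strictly Voronoi relevant. Adding $\pm 2a e_i \in 2\Lambda_n$ cannot decrease $\|v\|$, so $v_i \in [-a, a]$ for every $i$; combined with $v_i \equiv v_j \!\!\pmod a$, each coordinate takes at most two values $\{r, r-a\}$ for a common residue $r \in \{0, \dots, a-1\}$. The only a priori exception is $r = 0$, where the configuration containing all three of $-a, 0, a$ is killed by the shift-by-$2\mathbf{1}$ argument (which either strictly reduces the norm or produces a second minimum-norm companion, contradicting strictness). Setting $S \df \{i : v_i = r\}$ and $\ell \df a - r$ exhibits the $(S, \ell)$ shape, with the degenerate choices $S \in \{\emptyset, \{1, \dots, n\}\}$ forcing $v = c\mathbf{1}$ and then $c = \pm 1$ by a further shift argument. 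Finally, within any coset the achievable values of $\ell$ for a fixed $S$ form an arithmetic progression of step $2$, while the norm $s(a-\ell)^2 + (n-s)\ell^2$ is a convex parabola in $\ell$ minimised at $\ell = as/n$; strict minimality of $\|v\|$ therefore forces $\ell$ to be one of the two integers closest to $as/n$, i.e., $\lfloor as/n \rfloor$ or $\lceil as/n \rceil$.

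The main technical obstacle is the coordinate-wise case analysis in the ``if'' direction: three regimes of $t$ must be tracked, the degenerate values $\ell = 0$ (forcing $s = 1$) and $\ell = a$ (forcing $s = n - 1$) require a separate discussion of the residue-$r = 0$ subcase, and the precise floor/ceiling bound on $\ell$ must be shown to rule out every spurious integer zero of $g$.
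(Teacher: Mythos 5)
Your plan is built on the same backbone as the paper's proof — Voronoi's criterion that $v$ is relevant iff $\pm v$ are the unique shortest vectors in $v+2\Lambda_n$, phrased in your case as positivity of $F_v(y)=v^\intercal y + \|y\|^2$ on $\Lambda_n\setminus\{0,-v\}$ — and your ``only if'' direction follows the paper's argument almost step for step (box bound $v\in[-a,a]^n$ from $\pm 2ae_i$, residue structure $v\in\{a-\ell,-\ell\}^n$, and the step-$2$ arithmetic progression of admissible $\ell$ with the norm a convex parabola minimised at $as/n$). The ``if'' direction diverges: the paper takes a shortest $u$ in the coset and argues structurally that $u=\pm v$, whereas you fix the residue $t$ of $y$ modulo $a$, minimise $F_v$ coordinate-by-coordinate over $m\in\Z^n$ in $y=am+t\mathbf{1}$, and reduce the problem to showing the function $g(t)$ has no spurious integer zero in $\{0,\dots,a-1\}$. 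I have checked that this computation does close: on the three regimes $t<\ell/2$, $\ell/2<t<(a+\ell)/2$, and $t>(a+\ell)/2$ one gets $g(t)=nt(t-\ell+sa/n)$, $g(t)=n(t-\ell)(t-sa/n)$, and $g(t)=n(t-a)(t-\ell-(n-s)a/n)$ respectively, and precisely $\ell\in\{\lfloor sa/n\rfloor,\lceil sa/n\rceil\}$ makes the middle factor pair contain no integer between them, with unique minimisers $y=0$ at $t=0$ and $y=-v$ at $t=\ell$. This is a more computational but perfectly valid route, and arguably easier to make airtight than tracking the ``shortest coset representative'' through several subcases.

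One imprecision: for the $r=0$ case ($v\in\{0,\pm a\}^n$) in the ``only if'' direction, the shift by $\pm 2\mathbf{1}$ that you invoke does \emph{not} by itself eliminate all bad configurations. If $v$ has $k_+$ entries $+a$ and $k_-$ entries $-a$ with $|k_+-k_-|=1$, then $\|v\pm 2\mathbf{1}\|^2-\|v\|^2=4(n\mp a(k_+-k_-))$, which is nonnegative for one choice of sign and gives no contradiction. The clean argument (the one the paper uses) is a \emph{sign flip}: if $v$ has at least two nonzero entries, replacing one $\pm a$ by $\mp a$ via $\mp 2ae_j\in 2\Lambda_n$ yields a distinct same-norm vector in the coset that is not $\pm v$, so $v$ cannot be Voronoi relevant, leaving only $v=\pm ae_j$, which fit the shape $(S,\ell)$ with $(|S|,\ell)\in\{(1,0),(n-1,a)\}$. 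Replacing your $2\mathbf{1}$-shift with this sign flip repairs the gap.
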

\begin{proof}
Let us first discuss the vectors $\pm \mathbf{1}$. They have squared norm $n$, and if there is a shorter vector $v$, it must contain zero coordinates. But due to the definition of $\Lambda_n$, all its coordinates are then multiples of $a$, so it has squared norm at least $a^2 \geq n^2/4 \geq n$ for $n \geq 4$. Hence, $\pm \mathbf{1}$ are shortest vectors of the lattice and therefore always strictly Voronoi relevant.
As we are only interested in the strictly Voronoi relevant vectors in this proof, we will omit the word ``strictly'' henceforth.

Voronoi characterized a Voronoi relevant vector~$v$ in a lattice~$\Lambda$ by the property that $\pm v$ are the only shortest vectors in the co-set $v + 2 \Lambda$ (cf.~\cite[p.~477]{conwaysloane1999splag}).
We use this crucially to show that Voronoi relevant vectors different from $\pm \one$ are characterized by~\eqref{eq:shape_vectors}.

\underline{$v$ Voronoi relevant $\Rightarrow$ $v$ of Shape~\eqref{eq:shape_vectors}:}
Let $v \neq \pm \one$ be Voronoi relevant.
We have $v \in [-a,a]^n$, as $2a \, e_i \in 2 \Lambda_n$ otherwise implies that $v$ is not a shortest vector in its co-set $v + 2 \Lambda_n$.
Let us first assume that there is an index $i$ such that $v_i \in \{0,\pm a\}$.
By definition of $\Lambda_n$, we have $v_i \equiv v_j \!\mod a$, for all $j$, hence $v \in \{0,\pm a\}^n$.
If $v$ has at least two non-zero coordinates, let $\widetilde{v}$ arise from $v$ by changing the sign of exactly one of them.
Observe that $\widetilde{v}$ is linearly independent from~$v$, has the same length, and is contained in $v + 2\Lambda_n$.
This contradicts the assumption that~$v$ was Voronoi relevant.
If $v$ has only one non-zero entry, say $v_j \neq 0$, then it is of Shape~\eqref{eq:shape_vectors}.
Indeed, we can either take $S=\{j\}$ and $\ell=0$, or $S=\{1,\ldots,n\}\setminus\{j\}$ and $\ell = \lceil a(n-1)/n \rceil = a$.

This leaves us with the case $v \in [-(a-1),a-1]^n$.
Again, by the definition of $\Lambda_n$, there is an integer $1 \leq r \leq a-1$ such that $v \in \{a-r,-r\}^n$.
Let~$k$ be the number of entries of $v$ that are equal to~$a-r$.
Note that $1 \leq k \leq n-1$ as otherwise $v= \pm \mathbf{1}$.
For the norm of $v$, we obtain
\[
\norm{v}^2 = n r^2 - 2akr + ka^2.
\]
Seen as a rational quadratic function in $r$, it is minimized for $r^\prime = ak/n$.
As increasing or decreasing $r$ by $2$ corresponds to adding or subtracting $2 \cdot \one \in 2\Lambda_n$ to $v$, we must have $r \in [r^\prime -1, r^\prime + 1]$.
If $r^\prime$ is not integral, this corresponds to $r \in \{\lceil ak/n \rceil, \lfloor ak/n \rfloor\}$.
If $r^\prime$ is integral, observe that $r = r^\prime \pm 1$ corresponds to two linearly independent vectors in the same co-set and of the same length,
hence again $r = r^\prime \in \{\lceil ak/n \rceil, \lfloor ak/n \rfloor\}$,
so that~$v$ is indeed of Shape~\eqref{eq:shape_vectors}.

\underline{$v$ of Shape~\eqref{eq:shape_vectors} $\Rightarrow$ $v$ Voronoi relevant:}
For the other direction,
let $v$ be a vector of Shape~\eqref{eq:shape_vectors} with index set $S$ and parameter $\ell$.
Let $u \in v + 2\Lambda_n$ be a shortest vector within the co-set $v + 2\Lambda_n$.
We claim that $u = \pm v$, which will prove that~$v$ is Voronoi relevant.
To this end, recall from above that necessarily $u \in [-a,a]^n$.
Moreover, as $u-v \in 2\Lambda_n$, we have $v_i - v_j \equiv u_i - u_j \!\mod 2a$.
Therefore, if there are indices $i\neq j$ such that $v_i = v_j$, then we have $u_i \equiv u_j \!\mod 2a$. Unless we are in the extreme case $u \in \{0,\pm a\}^n$ (see Case (a)), this even implies $u_i = u_j$ (see Case (b)).

Case (a): We make a second case distinction depending on the number of non-zero entries of $u$. This number is always either equal to $\vert S \vert$ or $n-|S|$.

Note that the case of $u$ having exactly $1$ non-zero entry (i.e.\ $|S| \in \{1,n-1\}$) will be covered by Case (b) below.

If $u$ has at least $3$ non-zero entries ($|S| \in \{3,4,\dots,n-3\}$), observe that the vector $u^\prime=(u_1^\prime,\ldots,u_n^\prime)^\intercal$ defined by $u_i^\prime = \vert u_i \vert - 2$ is in the same co-set, but also shorter than $u$, a contradiction.

For the last case, $u$ having two non-zero entries, the vector $u^\prime$ as above is only strictly shorter if $n$ is odd. If $n$ is even however, $u$ and $u^\prime$ will have the same norm. In this particular case, observe that $a|S|/n \in \{1,a-1\}$, hence $\ell = a|S|/n$, as we do not round.
But this is a contradiction, as $u$ and $v$ differ by $\mathbf{1} \notin 2 \Lambda_n$, that is, they are not in the same co-set.

Case (b): Henceforth, whenever $v_i = v_j$, we have $u_i = u_j$.
By possibly switching to $u^\prime = -u$, we can assume that for some $0 \leq r \leq a$, $u_i = a-r$ for $i \in S$ and $u_j = -r$ for $j \notin S$. This is, $u_i = a-r$ whenever $v_i = a - \ell$ and $u_j = -r$ whenever $v_j = - \ell$.
For the norm of $u$, we obtain
\[
\|u\|^2 = |S|(a-r)^2 + (n-|S|) r^2 = n r^2 - 2ar|S| + |S|a^2.
\]
Seen as a rational quadratic function in $r$, this term is uniquely minimized for $\hat{r} = a|S|/n$. Observe that there may be two choices for $\ell$, $\lfloor \hat{r}\rfloor, \lceil \hat{r} \rceil$.
It is clear that $r$ also has to be one of these values, as otherwise $u$ is not a shortest vector in its co-set.
But observe that the two choices lead to two vectors whose difference is $\mathbf{1} \notin 2\Lambda_n$. As $u$ and $v$ have to be in the same co-set, we have $u = \pm v$, since we may have switched to $-u$ in the beginning.
\end{proof}

\begin{thm}
\label{thm:lower_bound}
Let $n \in \N_{\geq 4}$, $a = \lceil n/2 \rceil$.
Then, the lattice $\Lambda_n = \Lambda_n(a)$ has compactness constant $c(\Lambda_n) \geq \left\lceil \frac{n}{4} \right\rceil$.
\end{thm}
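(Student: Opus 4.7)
My plan is to argue by contradiction. Suppose $\Lambda_n$ admits a $c$-compact basis with $c := \lceil n/4 \rceil - 1$, and I will derive an obstruction based on the parity of $\mathbf{1}^\intercal b_j^\star$ for the dual basis vectors. By \autoref{lem:primal_dual}~iii), such a basis corresponds to a basis $b_1^\star, \ldots, b_n^\star$ of $\Lambda_n^\star$ satisfying $|v^\intercal b_j^\star| \leq c$ for every $v \in \cF_{\Lambda_n}$ and every $j$. Setting $t_j := \mathbf{1}^\intercal b_j^\star \in \Z$ (an integer since $\mathbf{1} \in \Lambda_n$), I will show that all $t_j$ are forced to be even. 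This contradicts the fact that $y \mapsto \mathbf{1}^\intercal y$ maps $\Lambda_n^\star$ onto $\Z$---indeed $e_1 \in \Lambda_n^\star$ (as $a e_1 \in \Lambda_n$) and $\mathbf{1}^\intercal e_1 = 1$, forcing $\gcd(t_1, \ldots, t_n) = 1$.

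The technical heart is the claim that every $y \in \Lambda_n^\star$ with $|v^\intercal y| \leq c$ for all $v \in \cF_{\Lambda_n}$ satisfies $t := \mathbf{1}^\intercal y \in 2\Z$. To prove it I fix such a $y$, put $w := a y \in \Z^n$ (so $\sum_i w_i = at$), and assume for contradiction that $t$ is odd; by $\pm$-symmetry we may take $t \geq 1$. A short elementary check yields $\lfloor a^2/n \rfloor = \lfloor a/2 \rfloor$ for both $n = 2a$ and $n = 2a - 1$, so \autoref{lem:vor-rel-vectors} produces, for every $S \subseteq \{1, \ldots, n\}$ with $|S| = a$, the strictly Voronoi relevant vector $v_S := a\mathbf{1}_S - \lfloor a/2 \rfloor \mathbf{1}$. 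The identity $v_S^\intercal y = \sum_{i \in S} w_i - \lfloor a/2 \rfloor t$ combined with the hypothesis $|v_S^\intercal y| \leq c$ yields $\sum_{i \in S} w_i \leq \lfloor a/2 \rfloor t + c$ for every such $S$; choosing $S$ to index the top $a$ entries of $w$ in the sorted order $w_{(1)} \geq \cdots \geq w_{(n)}$ gives the upper bound $T \leq \lfloor a/2 \rfloor t + c$, where $T := w_{(1)} + \cdots + w_{(a)}$.

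The main obstacle is to match this with a lower bound $T \geq a(t+1)/2$, and this is where integrality and the oddness of $t$ are genuinely used. Since $w_{(a)}$ and $(t \pm 1)/2$ are all integers, either $w_{(a)} \geq (t+1)/2$---in which case the top $a$ entries all satisfy this bound and $T \geq a(t+1)/2$ immediately---or $w_{(a)} \leq (t-1)/2$, in which case the remaining $n - a \leq a$ entries are each at most $(t-1)/2$ and therefore sum to at most $a(t-1)/2$, giving $T \geq at - a(t-1)/2 = a(t+1)/2$ as well. Combining the two bounds forces $c \geq a(t+1)/2 - \lfloor a/2 \rfloor t$; a brief case split on the parity of $a$ shows the right-hand side equals $a/2$ when $a$ is even and $(a+t)/2 \geq (a+1)/2$ when $a$ is odd, and cross-checking against the four parities of $n$ shows this is $\geq \lceil n/4 \rceil$ in every case, contradicting $c = \lceil n/4 \rceil - 1$. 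Hence every $t_j$ is even, $\gcd(t_1, \ldots, t_n)$ is even, the surjectivity above is violated, and the claimed bound $c(\Lambda_n) \geq \lceil n/4 \rceil$ follows.
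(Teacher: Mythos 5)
Your argument is correct and follows essentially the same route as the paper's proof: both hinge on the observation that some dual basis vector $y$ must satisfy $\one^\intercal y$ odd (because $\one^\intercal$ maps $\Lambda_n^\star$ onto $\Z$), then sort the entries of $y$, pick the Voronoi relevant vector from Lemma~\ref{lem:vor-rel-vectors} with $S$ indexing the top $a$ entries and $\ell = \lfloor a^2/n\rfloor$, and extract the bound from the resulting inner product. Your lower bound $T \geq a(t+1)/2$ is exactly the paper's inequality $A \geq B+1$ (with $A = T/a$, $B = t - T/a$); you derive it via a clean integrality case split on $w_{(a)}$, whereas the paper writes $A = z+\alpha$, $B = z-\gamma$ and uses that $\alpha-\gamma$ is an odd integer, but these are the same argument in different notation. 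Packaging it as a contradiction (``all $t_j$ even'') rather than directly (``the $y$ with $\one^\intercal y$ odd forces $c \geq n/4$'') is a cosmetic difference.
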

\begin{proof}
For brevity, we write $c = c(\Lambda_n)$, $Q = \conv(\cF_{\Lambda_n})$.
As $\one \in \Lambda_n$, there exists a $w \in \Lambda_n^\star$ with $\one^\intercal w = 1$, for instance, take $w=e_1$.
This implies that each basis of $\Lambda_n^\star$ contains a vector $y$ such that $\one^\intercal y$ is an odd integer.
In particular, using the characterization of Lemma~\ref{lem:primal_dual}, we know that $c \, Q^\star$ has to contain such a~$y$.
As~$Q^\star$ is centrally symmetric, assume $\one^\intercal y \geq 1$.
Further, since $\Lambda_n^\star$ is invariant under permutation of the coordinates, assume the entries of $y$ are ordered non-increasingly,
\begin{align}
y_1 \geq y_2 \geq \dots \geq y_n.\label{eqn:orderony}
\end{align}
Let us outline our arguments first:
We split $\one^\intercal y$ into two parts, by setting $A \df \sum_{i=1}^{k} y_i$, and $B \df \sum_{i>k}^n y_i$, where $k = \lceil n/2 \rceil$.
We show that $A \geq B + 1$, and construct a Voronoi relevant vector $v \in \Lambda_n$ whose first $k$ entries are roughly $n/4$, and its last $n-k$ entries are roughly $-n/4$ by using Lemma~\ref{lem:vor-rel-vectors} and choosing $S = \{1,\dots,k\}$, $\ell = \lfloor ak/n \rfloor = \lfloor a^2/n \rfloor$.
We then obtain $v^\intercal y \approx \frac{n}{4} A - \frac{n}{4} B \geq n/4$ by carefully distinguishing the four cases $n \!\mod 4$.

For showing $A \geq B + 1$, consider $y_k$.
As $y \in \Lambda_n^\star$, there is an integer~$z$ such that we can write $y_k = \frac{z}{a}$.
We can assume $z > 0$, since otherwise $B \leq 0$ and we are done since $A+B = \one^\intercal y \geq 1$.
Note that we have $A \geq k y_k = z$ and $B \leq (n-k) \frac{z}{a} \leq z$ by~\eqref{eqn:orderony}.
Let $\alpha, \gamma \geq 0$ such that $A = z + \alpha$ and $B= z - \gamma$. As $A+ B = 2z + \alpha - \gamma$ has to be an odd integer, we have $\vert \alpha - \gamma \vert \geq 1$, implying $\alpha \geq 1$ or $\gamma \geq 1$. 
Therefore, in fact we have $A \geq \max \{ B  + 1 , 1 \}$.

Using this inequality and carefully evaluating $v^\intercal y = (a - \ell)A - \ell B$ for the four cases $n \!\mod 4$, the claim follows.

Recall that we construct the Voronoi relevant vector $v$ by choosing $k = a = \lceil n/2 \rceil$, $S=\{1,\dots,k\}$, $\ell = \lfloor ak/n \rfloor = \lfloor a^2/n \rfloor$, and applying Lemma~\ref{lem:vor-rel-vectors}.

We obtain $v^\intercal y = (a - \ell)A - \ell B$, and are ready to distinguish the four cases $n \!\mod 4$.
\begin{enumerate}
\item $n = 4m$. Hence, we have $a = k = 2m$, and $\ell = m$. Thus,
\[
v^\intercal y = (a - \ell)A - \ell B = m(A-B) \geq m =n/4.
\]
\item $n = 4m + 1$. Hence, we have $a = k = 2m + 1$, and $\ell = m$. Thus,
\[
v^\intercal y = (a - \ell)A - \ell B = m(A - B) + A \geq m + 1 \geq n/4.
\]
\item $n = 4m + 2$. Hence, we have $a = k = 2m + 1$, and $\ell = m$. Thus,
\[
v^\intercal y = (a - \ell)A - \ell B = m(A - B) + A \geq m + 1 \geq n/4.
\]
\item $n = 4m + 3$. Hence, we have $a = k = 2m + 2$, and $\ell = m+1$. Thus,
\[
v^\intercal y = (a - \ell)A - \ell B = (m + 1)(A - B) \geq m+1 \geq n/4.
\]
\end{enumerate}
As the constant $c$ is integral, the claim follows.
\end{proof}

\subsection{Compact bases and zonotopal lattices}
\label{ssec:zonotopal-lattices}

For the sake of brevity, we call a $1$-compact basis of a lattice just a \emph{compact basis}.
A class of lattices that allow for a compact representation of their Voronoi cells are the lattices of \emph{Voronoi's first kind}.
They correspond to those lattices~$\Lambda$ that constitute the first reduction domain in Voronoi's reduction theory (see~\cite{vallentin2003thesis,voronoi1908deuxieme}).
These lattices have been characterized in~\cite{conwaysloane1992lowdim} by possessing an \emph{obtuse superbasis}, which is a set of vectors $\{b_0,\ldots,b_n\} \subseteq \Lambda$ that generates $\Lambda$, and that fulfills the superbasis condition $b_0 + \ldots + b_n = 0$ and the obtuseness condition $b_i^\intercal b_j \leq 0$, for all $i \neq j$.
Given an obtuse superbasis, for each Voronoi relevant vector $v \in \Lambda$ there is a strict non-empty subset $S \subseteq \{0,1,\dots,n\}$ such that $v = \sum_{i \in S} b_i$.

Let us compare lattices of Voronoi's first kind with lattices possessing a compact basis.
\pagebreak
\begin{prop}
\label{prop:compactness}\ 
\begin{enumerate}[\indent $i)$]
 \item Every lattice of Voronoi's first kind has a compact basis.

 \item Every lattice of rank at most three has a compact basis.
 
 \item For $n \geq 4$, the \emph{checkerboard lattice} $D_n = \{x \in \Z^n : \mathbf{1}^\intercal x \in 2\Z\}$ is not of Voronoi's first kind, but has a compact basis.
 
 \item There exists a lattice $\Lambda \subseteq \R^5$ with $c(\Lambda) \geq 2$.
\end{enumerate}
\end{prop}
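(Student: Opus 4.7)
The strategy is to prove the four parts in order, with (ii) and (iv) being immediate consequences of (i) and Theorem~\ref{thm:lower_bound}, respectively.

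For part (i), given an obtuse superbasis $\{b_0, b_1, \ldots, b_n\}$ with $\sum_{i=0}^{n} b_i = 0$, I take $B = \{b_1, \ldots, b_n\}$ as a basis of $\Lambda$. By the description recalled just above the proposition, every strictly Voronoi relevant vector has the form $v = \sum_{i \in S} b_i$ for some $\emptyset \neq S \subsetneq \{0, \ldots, n\}$. If $0 \notin S$, then $v$ is a $\{0,1\}$-combination of $B$; otherwise, substituting $b_0 = -\sum_{j=1}^{n} b_j$ rewrites $v = -\sum_{j \notin S} b_j$ as a $\{-1,0\}$-combination. Hence $B$ is $1$-compact. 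Part (ii) follows at once using the classical fact (Selling reduction; cf.~\cite{conwaysloane1992lowdim}) that every lattice of rank at most three admits an obtuse superbasis.

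For part (iii), I first exhibit an explicit basis of $D_n$ by setting $b_1 = e_1 + e_2$, $b_2 = e_1 - e_2$, and $b_i = e_1 - e_i$ for $3 \leq i \leq n$. A determinant calculation confirms $|\det B| = 2 = \det D_n$, and applying Voronoi's coset criterion modulo $2 D_n$ identifies the strictly Voronoi relevant vectors of $D_n$ as exactly the $2n(n-1)$ roots $\pm e_i \pm e_j$ (longer lattice vectors either admit a shorter companion in their coset, or, if they lie in $\{-1,0,1\}^n$ with more than two nonzero entries, admit a length-preserving sign-flip partner). Direct expansions such as $e_1 + e_j = b_1 + b_2 - b_j$ and $e_i + e_j = b_1 + b_2 - b_i - b_j$ for $i, j \geq 3$ then show that every Voronoi relevant vector has $\{-1, 0, 1\}$-coordinates in $B$. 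To establish that $D_n$ is not of Voronoi's first kind for $n \geq 4$, I would assume an obtuse superbasis $\{b_0, \ldots, b_n\}$ exists and derive a contradiction by analyzing the combinatorial constraints that arise from writing each $\pm e_i \pm e_j = \sum_{k \in S} b_k$ together with the obtuseness and sum-to-zero conditions; a careful case analysis on the coordinate-usage pattern of the $b_k$ rules out every configuration. Part (iv) is immediate from Theorem~\ref{thm:lower_bound}: taking $n = 5$ yields $c(\Lambda_5(3)) \geq \lceil 5/4 \rceil = 2$.

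The main obstacle is the non-V1K claim in part (iii), as naive counting of Voronoi relevant vectors of $D_n$ does not exceed the universal upper bound $2(2^n-1)$ and thus cannot rule out Voronoi's first kind by cardinality alone. A structural case analysis of candidate obtuse superbases (exploiting the explicit root system of $D_n$) is the delicate step one has to carry out.
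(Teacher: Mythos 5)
Your arguments for parts (i), (ii), and (iv) match the paper exactly: (i) is proved by splitting the superbasis sum $v=\sum_{i \in S} b_i = -\sum_{i \notin S} b_i$ and taking the side avoiding $b_0$; (ii) invokes the fact that all lattices of rank $\leq 3$ are of Voronoi's first kind; (iv) uses $\Lambda_5(3)$ and Theorem~\ref{thm:lower_bound}. For part (iii) your basis ($b_1=e_1+e_2$, $b_2=e_1-e_2$, $b_i=e_1-e_i$) differs from the paper's ($b_1=e_1+e_n$, $b_i=e_i-e_{i-1}$), but both are legitimate choices and your verification that the $2n(n-1)$ root vectors $\pm e_i \pm e_j$ have $\{-1,0,1\}$-coordinates in your basis is correct, as is the coset argument identifying $\cF_{D_n}$ with the roots.

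The genuine gap is the non-V1K claim in part (iii). You write that you ``would assume an obtuse superbasis $\{b_0,\ldots,b_n\}$ exists and derive a contradiction by analyzing the combinatorial constraints\ldots a careful case analysis on the coordinate-usage pattern of the $b_k$ rules out every configuration,'' but you never actually perform that case analysis; this is a plan, not a proof, and you flag it yourself as the delicate step. The paper sidesteps this entirely by citing Bost \& K\"unnemann~\cite[Prop.~B.2.6]{bost2010hermitian} for the fact that $D_n$ ($n\geq 4$) is not of Voronoi's first kind. If you want a self-contained argument rather than a citation, you must supply the structural analysis; as written, part (iii) of your proposal is incomplete.
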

\begin{proof}
$i)$: Every obtuse superbasis contains in fact a compact basis.
Indeed, using the representation of a Voronoi relevant vector above and writing $b_0 = - \sum_{i=1}^n b_i$, we get $v=\sum_{i \in S} b_i = - \sum_{i \notin S} b_i$.
One of the terms does not use~$b_0$.

$ii)$: Every lattice of dimension at most three is of Voronoi's first kind (cf.~\cite{conwaysloane1992lowdim}), so part~i) applies.

$iii)$: Bost \& K{\"u}nnemann~\cite[Prop.~B.2.6]{bost2010hermitian} showed that for $n \geq 4$, the lattice~$D_n$ is not of Voronoi's first kind.
One can easily verify that the set $B=\{b_1,\dots,b_n\}$ with
$b_1 = e_1 + e_n$, and $b_i = e_i - e_{i-1}$ for $2 \leq i \leq n$, is a basis of~$D_n$.
Observing that the vectors $2e_i \pm 2e_j$ are in $2 D_n$ for all $i,j$, a vector~$v$ that is the unique (up to sign) shortest vector in the co-set $v + 2\Lambda$, must be of the form $\{\pm (e_i \pm e_j) : 1 \leq i < j \leq n \}$.
A routine calculation shows that all these vectors are a $\{-1,0,1\}$-combination of the basis $B$.

$iv)$: This follows from Theorem~\ref{thm:lower_bound} with the lattice $\Lambda_5(3)$.
\end{proof}

We now explore to which extent these initial observations on lattices with compact bases can be generalized.

A \emph{zonotope} $Z$ in $\R^n$ is a Minkowski sum of finitely many line segments, that is, $Z = \sum_{i=1}^r [a_i,b_i]$, for some $a_i,b_i \in \R^n$.
The vectors $b_1-a_1,\ldots,b_r-a_r$ are usually called the \emph{generators} of~$Z$.
We call a lattice \emph{zonotopal} if its Voronoi cell is a zonotope.
A generic zonotopal lattice has typically high combinatorial complexity.
An explicit example is the root lattice~$A_n^\star$; its zonotopal Voronoi cell is generated by $\binom{n+1}{2}$ vectors and it has exactly the maximum possible $2(2^n - 1)$ facets (cf.~\cite[Ch.~4 \& Ch.~21]{conwaysloane1999splag}).
However, not every generic lattice is zonotopal.
For instance, a perturbation of the~$E_8$ root lattice gives a generic non-zonotopal lattice (cf.~\cite[Sect.~4]{erdahlryshkov1994latticedicing}).

It turns out that every lattice of Voronoi's first kind is zonotopal, but starting from dimension four, the class of zonotopal lattices is much richer (cf.~Vallentin's thesis~\cite[Ch.~2]{vallentin2003thesis} and~\cite{erdahlryshkov1994latticedicing}).
In the following, we prove that every zonotopal lattice possesses a compact basis, thus extending Proposition~\ref{prop:compactness}~$i)$ significantly.

Our proof relies on the beautiful work of Erdahl~\cite{erdahl1999zonotopes} who unraveled an intimate relationship between zonotopal lattices and so-called dicings.
A \emph{dicing}~$\fD$ in $\R^n$ is an arrangement of hyperplanes consisting of at least~$n$ families of infinitely many equally-spaced hyperplanes with the following properties:
\begin{enumerate}[ (i)]
 \item There are~$n$ families with linearly independent normal vectors.
 \item Every vertex of $\fD$ is contained in a hyperplane of each family.
\end{enumerate}
The interesting cases are those with more than~$n$ families of hyperplanes.

It turns out that the vertex set of a dicing forms a lattice, denoted by $\Lambda(\fD)$.
Indeed, the vertex set induced by the~$n$ linearly independent families forms a lattice, and because of property~(ii) no additional vertices are introduced by the remaining families.
A basis of the lattice $\Lambda(\fD)$ may be obtained from taking the inverse of the matrix whose rows are $n$ linearly independent normal vectors appropriately scaled (they exist by property~(i)).

Erdahl~\cite[Thm.~3.1]{erdahl1999zonotopes} shows that a dicing $\fD$ can be represented by a set $D = \{\pm d_1,\ldots,\pm d_r\}$ of hyperplane normals and a set $E = \{\pm e_1,\ldots,\pm e_s\} \subseteq \Lambda(\fD)$ of edge vectors of the arrangement~$\fD$ satisfying:

\begin{enumerate}[\indent E1)]
 \item Each pair of edges $\pm e_j \in E$ is contained in a line $d_{i_1}^\perp \cap \ldots \cap d_{i_{n-1}}^\perp$, for some linearly independent $d_{i_1},\ldots,d_{i_{n-1}} \in D$, and conversely each such line contains a pair of edges.

 \item For each $1 \leq i \leq r$ and $1 \leq j \leq s$, we have $d_i^\intercal e_j\in \{0,\pm 1\}$.
\end{enumerate}

\noindent For clarity we denote the dicing by $\fD=\fD(D,E)$.

\begin{thm}
\label{thm:zonotopal}
Every zonotopal lattice has a compact basis.
It can be found among its Voronoi relevant vectors.
\end{thm}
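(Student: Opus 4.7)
The plan is to invoke Erdahl's structural description of zonotopal lattices in terms of dicings, and then to extract a $1$-compact basis from the edge vectors of the dicing. By Erdahl's theorem, since $\Lambda$ is zonotopal, the dicing $\fD$ whose vertex set coincides with $\Lambda$ admits a representation $\fD(D,E)$ with $D=\{\pm d_1,\ldots,\pm d_r\}$ and $E=\{\pm e_1,\ldots,\pm e_s\}\subseteq\Lambda$ satisfying properties E1 and E2.

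The first step is to build a basis $B$ of $\Lambda$ consisting of (signed) edge vectors. By property~(i) in the definition of a dicing, $D$ contains $n$ linearly independent normals; scaled appropriately, these form a basis $\{d_1,\ldots,d_n\}$ of the dual lattice $\Lambda^\star$. Let $B=\{b_1,\ldots,b_n\}$ be the corresponding dual basis of $\Lambda$, characterized by $d_i^\intercal b_j=\delta_{ij}$. Each $b_j$ then lies on the one-dimensional subspace $\bigcap_{i\neq j} d_i^\perp$, which by property E1 contains a pair $\pm e_{k(j)}\in E$. Hence $b_j=\mu_j e_{k(j)}$ for some $\mu_j\in\R$; the identity $d_j^\intercal b_j=1$ combined with property E2 forces $\mu_j\in\{\pm1\}$, so $B$ is a basis of $\Lambda$ whose members are $\pm$ edge vectors of $\fD$.

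To conclude that $B$ is $1$-compact I would apply Lemma~\ref{lem:primal_dual}~iii): it suffices to show $|v^\intercal d_i|\leq 1$ for every $v\in\cF_\Lambda$ and every $d_i\in D$. The decisive ingredient is the containment $\cF_\Lambda\subseteq E$, asserting that every Voronoi relevant vector of a zonotopal lattice is itself an edge of the underlying dicing. Once this is established, property E2 immediately yields $|v^\intercal d_i|\in\{0,1\}$ and $1$-compactness follows. The additional assertion that the basis can be found among the Voronoi relevant vectors then reduces to the reverse inclusion $E\subseteq\cF_\Lambda$, which is natural because two lattice points joined by an edge of $\fD$ are nearest neighbors and hence their Voronoi cells share a facet.

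The main obstacle is therefore pinning down the identification $\cF_\Lambda = E$. To bridge Erdahl's combinatorial data of the dicing with the convex-geometric description of $\vc_\Lambda$, I would rely on the explicit representation of the Voronoi cell as a Minkowski sum of segments indexed by $D$; the standard facet-edge duality for zonotopes will then match Voronoi facet normals (the elements of $\cF_\Lambda$) with dicing edges (the elements of $E$). Once this correspondence is set up, the remaining arguments are routine linear-algebraic bookkeeping via the dual basis construction, completing the proof.
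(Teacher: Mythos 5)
Your overall strategy is the same as the paper's: invoke Erdahl's theorem to get a dicing $\fD(D,E)$ associated with $\Lambda$, extract from $n$ linearly independent normals $d_1,\ldots,d_n$ a dual basis consisting of edge vectors (your dual-basis argument is a harmless rephrasing of the paper's direct construction of $e_1,\ldots,e_n$ with $d_i^\intercal e_j=\delta_{ij}$ from E1 and E2), and conclude via Lemma~\ref{lem:primal_dual}~iii). This part is fine.

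However, you correctly identify the crux as the identity $\cF_\Lambda=E$, and there you leave a genuine gap. The paper does not obtain this from a ``facet--edge duality for zonotopes'' between $\vc_\Lambda$ and a central arrangement; that correspondence, at best, matches \emph{directions} of facet normals with directions of arrangement cocircuits and does not by itself pin down the Voronoi relevant vectors as specific lattice points equal to the edge vectors of $\fD$. What the paper uses, and what you never mention, is the sharper content of Erdahl's theorem: the Voronoi cell of $\Lambda$ is a zonotope if and only if the \emph{Delaunay tiling} of $\Lambda$ is a dicing. Once the dicing is recognized as the Delaunay tiling, the classical Delaunay--Voronoi duality immediately gives that the edges of $\cD_\Lambda$ incident to the origin are precisely the segments $[0,v]$ for $v\in\cF_\Lambda$, so $E=\cF_\Lambda$ as sets of lattice vectors, not merely as sets of directions. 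Without the Delaunay interpretation, your proposal has no mechanism to match the combinatorial edge vectors $E$ to the metrically-defined Voronoi relevant vectors, so the step you call ``routine bookkeeping'' is in fact where the real content lies. Likewise your heuristic that dicing-adjacent lattice points are ``nearest neighbors'' is not the right criterion for Voronoi relevance; what is true, and what the paper exploits, is that dicing (i.e.\ Delaunay) edges at $0$ dualize to Voronoi facets.
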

\begin{proof}
We start by reviewing the \emph{Delaunay tiling} of a lattice~$\Lambda$.
A sphere $B_c(R) = \{x \in \R^n : \|x - c\|^2 \leq R^2\}$ is called an \emph{empty sphere} of~$\Lambda$ (with center $c \in \R^n$ and radius $R \geq 0$), if every point in $B_c(R) \cap \Lambda$ lies on the boundary of $B_c(R)$.
A \emph{Delaunay polytope} of $\Lambda$ is defined as the convex hull of $B_c(R) \cap \Lambda$, where $B_c(R)$ is an empty sphere.
The family of all Delaunay polytopes induces a tiling $\cD_\Lambda$ of $\R^n$ which is the Delaunay tiling of~$\Lambda$.
This tiling is in fact dual to the Voronoi tiling.

Erdahl~\cite[Thm.~2]{erdahl1999zonotopes} shows that the Voronoi cell of a lattice is a zonotope if and only if its Delaunay tiling is a dicing.
More precisely, the tiling $\cD_\Lambda$ induced by the Delaunay polytopes of~$\Lambda$ is equal to the tiling induced by the hyperplane arrangement of a dicing $\fD = \fD(D,E)$ with normals $D = \{\pm d_1,\ldots,\pm d_r\}$ and edge vectors $E = \{\pm e_1,\ldots,\pm e_s\}$.
By the duality of the Delaunay and the Voronoi tiling, an edge of $\cD_\Lambda$ containing the origin corresponds to a facet normal of the Voronoi cell $\vc_\Lambda$.
Therefore, the edge vectors $E$ are precisely the Voronoi relevant vectors of~$\Lambda$.

Now, choosing $n$ linearly independent normal vectors, say $d_1,\ldots,d_n \in D$, the properties E1) and E2) imply the existence of edge vectors, say $e_1,\ldots,e_n \in E$, such that $d_i^\intercal e_j = \delta_{ij}$, with $\delta_{ij}$ being the Kronecker delta.
Moreover, the set $B = \{e_1,\ldots,e_n\}$ is a basis of $\{x \in \R^n : d_i^\intercal x \in \Z, 1 \leq i \leq n \}$, which by property~E2) equals the whole lattice~$\Lambda$.
Hence, $\{d_1,\ldots,d_n\}$ is the dual basis of~$B$ and every Voronoi relevant vector $v \in \cF_\Lambda = E$ fulfills $d_i^\intercal v \in \{0,\pm 1\}$.
In view of Lemma~\ref{lem:primal_dual}~$iii)$, this means that~$B$ is a compact basis of~$\Lambda$ consisting of Voronoi relevant vectors, as desired.
\end{proof}

\subsection{Compact bases in small dimensions}
\label{ssec:dimensions-four}

We have seen in Proposition~\ref{prop:compactness} that every lattice of rank at most three has a compact basis, and that there are five-dimensional lattices without compact bases.
In the sequel we complete the picture and show that every four-dimensional lattice admits a compact basis as well.

Our argument uses tools from the theory of parallelotopes which requires to set up the compactness constant in this more general framework.
For details and background on the following definitions and statements on parallelotopes we refer to~\cite[\S 32]{gruber2007convex}.
A \emph{parallelotope} (also called parallelohedron) is a convex polytope~$P \subseteq \R^n$ that admits a facet-to-facet tiling of~$\R^n$ by translations.
Voronoi cells of lattices are prime examples of parallelotopes.
Every parallelotope is centrally symmetric, and we may assume that its center of symmetry is at the origin.
The set of translation vectors that constitute the facet-to-facet tiling by copies of~$P$ is in fact a lattice, and we denote it by~$\Lambda(P)$.
Every facet~$F$ of~$P$ corresponds to a lattice vector $x \in \Lambda(P)$ such that $P \cap (P+x) = F$.
Such a lattice vector is called a \emph{facet vector}.
More generally, a lattice vector $x \in \Lambda(P)$ such that $P \cap (P+x)$ is a face of both~$P$ and $P+x$ is called a \emph{standard vector} of~$P$.

For Voronoi cells the facet vectors and the standard vectors are exactly the strictly and weakly Voronoi relevant vectors, respectively.
Therefore, we can extend our notation from the previous sections from Voronoi cells and lattices, to general parallelotopes:
We write $\cF_P$ and $\cC_P$ for the set of facet vectors and standard vectors of~$P$, respectively, and $c(P)$ and $\chi(P)$ for the corresponding compactness constants.
For example, $\chi(P)$ is the minimal $\chi > 0$ such that there is a basis $B=\{b_1,\ldots,b_n\}$ of $\Lambda(P)$ with the property that every standard vector $x \in \cC_P$ can be written as $x = \sum_{i=1}^n \gamma_i b_i$, for some $|\gamma_i| \leq \chi$.

With this notation we prove the crucial fact, that if a parallelotope~$Q$ decomposes into the Minkowski sum of another parallelotope $P$ and a (possibly lower-dimensional) zonotope~$Z$, then $\chi(Q) \leq \chi(P)$.
We write $Z(U) = \sum_{i=1}^r [-u_i,u_i]$ for the zonotope spanned by the set of vectors $U = \{u_1,\ldots,u_r\}$.

\begin{prop}
\label{prop:zonotopal-part}
Let $Q \subseteq \R^n$ be a parallelotope that admits a decomposition $Q = P + Z(U)$, for some parallelotope~$P$, and a finite set of vectors $U \subseteq \R^n$.
Then, there is a linear map $\varphi: \R^n \rightarrow \R^n$ with $\varphi(\Lambda(P)) = \Lambda(Q)$ satisfying
\goodbreak
\begin{enumerate}[\indent $i)$]
\item For $x \in \cC_Q$, we have $\varphi^{-1}(x) \in \cC_P$.
\item For $x \in \cF_P$, we have $\varphi (x) \in \cF_Q$.
\end{enumerate}
In particular, $\chi(Q) \leq \chi(P)$.
\end{prop}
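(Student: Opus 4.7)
The plan is to separate the statement into the construction of the linear map $\varphi$ satisfying properties~$i)$ and~$ii)$, and the subsequent derivation of $\chi(Q) \le \chi(P)$. The latter is a short transfer argument that I would handle first: given a $\chi(P)$-compact basis $B = \{b_1,\ldots,b_n\}$ of $\Lambda(P)$, the image $\varphi(B)$ is a basis of $\Lambda(Q)$ because $\varphi$ is a linear bijection between these lattices. For any $x \in \cC_Q$, property~$i)$ gives $\varphi^{-1}(x) \in \cC_P$, hence $\varphi^{-1}(x) = \sum_i \alpha_i b_i$ with $|\alpha_i| \le \chi(P)$; applying $\varphi$ yields $x = \sum_i \alpha_i \varphi(b_i)$ with the same coefficient bound, so $\varphi(B)$ is a $\chi(P)$-compact basis of $\Lambda(Q)$.

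The construction of $\varphi$ would proceed from the face decomposition induced by $Q = P + Z(U)$. For any direction $\nu \in \R^n$, the face of $Q$ supported in direction $\nu$ equals $F_\nu + Z_\nu$, where $F_\nu$ is the face of $P$ maximising $\nu$ and $Z_\nu$ is the face of $Z(U)$ maximising $\nu$. Since $P$, $Q$, and $Z(U)$ are centrally symmetric, every standard vector is twice the centre of the face it determines. Writing $c(\cdot)$ for face centres, I would pick a basis of $\Lambda(P)$ consisting of facet vectors of $P$, and for each such facet vector $f = 2c(F)$ with outward normal $\nu$, declare
\[
\varphi(f) \df f + 2 c(Z_\nu) = f + 2 \sum_j \sgn(u_j^\intercal \nu)\, u_j,
\]
then extend by $\R$-linearity to $\R^n$. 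Property~$ii)$ is immediate from the construction, since $F + Z_\nu$ remains codimension one in $Q$ (the face $Z_\nu$ lies in $\nu^\perp$, so dimensions add up correctly). Property~$i)$ is read off in reverse: for $x \in \cC_Q$ with $Q \cap (Q + x) = G = F + Z_\nu$, the face decomposition and linearity force $\varphi^{-1}(x) = 2c(F) \in \cC_P$.

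The main obstacle I expect is verifying that the $\R$-linear extension of $\varphi$, defined only on the chosen basis of facet vectors of $P$, actually sends $\Lambda(P)$ bijectively onto $\Lambda(Q)$, and that property~$i)$ holds uniformly across all of $\cC_Q$, including standard vectors of $Q$ that are not images of facet vectors of $P$ (as happens already for the hexagon obtained from a square and a diagonal zone, where new facet vectors of $Q$ appear not parallel to any facet vector of $P$). Both claims reduce to a careful bookkeeping of how $\Lambda(Q)$ is generated by facet vectors of $Q$ in terms of the facet vectors of $P$ and the zone contributions $2c(Z_\nu)$. The parallelotope tiling conditions on $P$ and $Q$, together with the zonotope decomposition of $Z(U)$, should make the needed compatibility go through, but extracting this from the mere existence of the Minkowski decomposition is the technical heart of the argument.
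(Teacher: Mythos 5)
Your transfer argument for $\chi(Q)\le\chi(P)$ given properties $i)$ and $ii)$ is correct and matches the paper. The gap lies exactly where you flag it: the construction of $\varphi$ and the verification that it is a lattice isomorphism satisfying $i)$ and $ii)$. Defining $\varphi$ on a chosen basis of facet vectors of $P$ via $f \mapsto f + 2c(Z_{\nu_f})$ and extending linearly does not obviously yield a map whose action on the \emph{other} facet/standard vectors of $P$ respects their own face decompositions; nor is it clear that the extension carries $\Lambda(P)$ onto $\Lambda(Q)$ rather than onto some other lattice. ``Careful bookkeeping'' of how $\Lambda(Q)$ is generated will not close this by itself --- one needs an actual structure theorem relating $\Lambda(P)$ and $\Lambda(Q)$ as lattices.

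The paper supplies that structure theorem by invoking the theory of free vectors for parallelotopes (Dutour Sikiri\'{c}, Grishukhin \& Magazinov). The key point your sketch is missing is the reduction: if $P + Z(U)$ is a parallelotope, then each $u\in U$ is a free vector for $P$, so one gets a chain $P = P_0 \subseteq P_1 \subseteq \cdots \subseteq P_r = Q$ with $P_i = P_{i-1} + [-u_i,u_i]$, reducing everything to a \emph{single} segment. For $Q = P + [-u,u]$, their Lemmas~1 and~3 give an explicit dual vector $e_u \in \Lambda(P)^\star$ such that $A_u x = x + 2(e_u^\intercal x)\,u$ satisfies $A_u\Lambda(P) = \Lambda(Q)$, and characterize standard vectors of $Q$ as $A_u w$ with $w \in \cC_P$ and $e_u^\intercal w \in \{0,\pm1\}$ (which gives $i)$); the proof of their Lemma~1 further shows $A_u$ maps facet vectors to facet vectors (giving $ii)$). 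The paper then composes the $A_{u_i}$ for $r\ge 2$. Note that $A_u$ is linear from the outset (the coefficient $e_u^\intercal x$ is linear in $x$), so the question of whether a piecewise ``$\sgn$'' recipe extends linearly never arises --- that recipe coincides with $A_u$ on standard vectors precisely because $e_u^\intercal w \in \{0,\pm1\}$ there, but off that set it would not, and your approach gives no control over which case you are in. Without the free-vector machinery or an equivalent substitute, your argument does not go through.
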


\begin{proof}
First note that if $P + Z(U)$ is a parallelotope, then every vector $u \in U$ is a \emph{free} vector for~$P$, that is, $P + [-u,u]$ is a parallelotope as well (cf.~\cite{dutourgrishukhinmagazinov2014onthesum}).
We thus get a chain of parallelotopes $P = P_0 \subseteq P_1 \subseteq \ldots \subseteq P_r=Q$, where $P_i = P_{i-1} + [-u_i,u_i]$, for $1 \leq i \leq r$, and $U = \{u_1,\ldots,u_r\}$.
By induction on $r$ it thus suffices to consider the case $r=1$.

Hence, let $Q = P + [-u,u]$, for some non-zero vector $u \in \R^n$.
Dutour Sikiri\'{c} et al.~\cite[Lem.~1 \& Lem.~3]{dutourgrishukhinmagazinov2014onthesum} give a characterization of the standard vectors of~$Q$ in terms of those of~$P$:
First, there is a dual lattice vector $e_u \in \Lambda(P)^\star$ such that $\Lambda(Q) = A_u \Lambda(P)$, where $A_u x = x + 2 (e_u^\intercal x) u$, for $x \in \R^n$.
Then, $z = A_u w \in \Lambda(Q)$ is a standard vector of~$Q$ if and only if~$w$ is a standard vector of~$P$, and $e_u^\intercal w \in \{0,\pm 1\}$.

This implies that $\varphi(x)=A_u x$ is a bijection between the lattices $\Lambda(P)$ and $\Lambda(Q)$ satisfying~$i)$.
Moreover, the proof of~\cite[Lem.~1]{dutourgrishukhinmagazinov2014onthesum}
reveals that $A_u$, and thus~$\varphi$, sends facet vectors to facet vectors,
hence~$ii)$ holds as well.
For $r\geq2$, we define $\varphi$ inductively by setting $\varphi(x) = A_{u_r} \cdot\ldots\cdot A_{u_1} x$.

Finally we show that $\chi(Q) \leq \chi(P)$.
As just observed, any basis~$B$ of~$\Lambda(P)$ is sent to a basis $\varphi(B)$ of $\Lambda(Q)$.
Moreover, a standard vector $y = \sum_{i=1}^n \alpha_i \varphi(b_i) \in \cC_Q$ represented in the basis $\varphi(B)$ corresponds to a standard vector $\varphi^{-1}(y) = \sum_{i=1}^n \alpha_i b_i \in \cC_P$ using the same coefficients when represented in the basis $B$.
Thus, if every vector in $\cC_P$ can be represented in $B$ with coefficients bounded by $\chi(P)$, the same holds for all vectors in $\cC_Q$ with respect to $\varphi(B)$.
\end{proof}

As a consequence, we get that zonotopal parallelotopes $Z$ allow for a compact representation even of the set $\cC_Z$
which strengthens Theorem~\ref{thm:zonotopal}.
In particular, every three-dimensional parallelotope has this property (cf.~\cite[\S 32.2]{gruber2007convex}).

\begin{cor}
\label{cor:zonotopes-chi-compact}
Let $Z$ be a parallelotope that is a zonotope.
Then $\chi(Z)=1$.
\end{cor}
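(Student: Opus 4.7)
The plan is to derive the corollary as a direct application of Proposition~\ref{prop:zonotopal-part}, realizing $Z$ as $P + Z(U')$ for a parallelepiped base $P$ whose compactness constant is $1$. Since $Z$ is a full-dimensional zonotope, we have $Z = \sum_{i=1}^r [-u_i, u_i]$ for some generators $u_1, \ldots, u_r \in \R^n$, and by full-dimensionality we may reorder them so that $u_1, \ldots, u_n$ are linearly independent. I then set $P = \sum_{i=1}^n [-u_i, u_i]$ (a parallelepiped centered at the origin, hence a parallelotope) and $U' = \{u_{n+1}, \ldots, u_r\}$, so that $Z = P + Z(U')$. Proposition~\ref{prop:zonotopal-part} immediately yields $\chi(Z) \leq \chi(P)$.

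The remaining step is to verify that $\chi(P) = 1$. Working in the coordinate system given by $u_1, \ldots, u_n$, the parallelepiped $P$ becomes the cube $[-1, 1]^n$, and its translation lattice $\Lambda(P)$ is generated by $\{2u_1, \ldots, 2u_n\}$. A direct computation of the intersection $P \cap (P + t)$ for $t = \sum_{i=1}^n \epsilon_i (2u_i) \in \Lambda(P)$ shows that it is non-empty exactly when each $\epsilon_i \in \{-1, 0, 1\}$; in that case the intersection is automatically a face of both $P$ and $P + t$, of codimension $|\{i : \epsilon_i \neq 0\}|$. Hence the standard vectors of $P$ are precisely the non-zero $\{-1, 0, 1\}$-combinations of $\{2u_1, \ldots, 2u_n\}$, and taking this set as a basis of $\Lambda(P)$ shows $\chi(P) = 1$. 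Together with the trivial lower bound $\chi(Z) \geq 1$ (any facet vector of $Z$ is a non-zero lattice vector, so has at least one integer coefficient of absolute value $\geq 1$ in any basis of $\Lambda(Z)$), we conclude $\chi(Z) = 1$.

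The main obstacle is ensuring that the decomposition $Z = P + Z(U')$ genuinely fits the hypotheses of Proposition~\ref{prop:zonotopal-part}: one must know that each intermediate Minkowski sum in the chain $P \subseteq P + [-u_{n+1}, u_{n+1}] \subseteq \cdots \subseteq Z$ is still a parallelotope, so that the single-segment step underlying the proposition can be iterated. This is, however, precisely the freeness property of summands of parallelotopes that already drives the proof of Proposition~\ref{prop:zonotopal-part}: because $Z$ is a parallelotope and $P$ is a Minkowski summand of it, every $u \in U'$ is free for $P$ (and, inductively, for each successive partial sum), so invoking the proposition as stated suffices without further verification.
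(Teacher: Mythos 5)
Your proposal is correct and follows the same route as the paper: the identical decomposition $Z = P + Z(U')$ with $P$ the parallelepiped spanned by $n$ linearly independent generators, the same invocation of Proposition~\ref{prop:zonotopal-part}, and the same identification of the standard vectors of a parallelepiped as the nonzero $\{0,\pm 1\}$-combinations of a basis of $\Lambda(P)$ (the paper phrases this via the facet vectors $\pm f_i$, which are exactly your $\pm 2u_i$). The only addition is your explicit note on the trivial lower bound $\chi(Z)\geq 1$, which the paper leaves implicit.
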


\begin{proof}
There is a set of vectors $U'=\{u_1,\ldots,u_m\}$ such that $Z=Z(U')$.
We may assume that $u_1,\ldots,u_n$ are linearly independent, and we write $P=[-u_1,u_1]+\ldots+[-u_n,u_n]$.
Then, $Z = P + Z(U)$, for $U=U' \setminus\{u_1,\ldots,u_n\}$, and since $P$ is a parallelepiped, it is actually a parallelotope.

Thus, Proposition~\ref{prop:zonotopal-part} implies that $\chi(Z) \leq \chi(P)$ and it suffices to show that $\chi(P)=1$ for every parallelepiped~$P$.
The standard vectors of~$P$ are exactly those $x\in\Lambda(P) \setminus \{0\}$ such that $P \cap (x+P) \neq \emptyset$.
Writing $\pm f_1,\ldots,\pm f_n$ for the~$n$ pairs of facet vectors of~$P$, we find that $\{f_1,\ldots,f_n\}$ is a basis of $\Lambda(P)$ in which every standard vector admits a $\{0,\pm 1\}$-representation.
\end{proof}

We now focus again on parallelotopes that are Voronoi cells but work in the more convenient language of quadratic forms.
A famous conjecture of Voronoi states that every parallelotope is an affine image of a Voronoi cell of a lattice (cf.~\cite[\S 32]{gruber2007convex}).
As long as this is not settled we need to make the distinction.

Let $q : \R^n \to \R$ be a positive definite quadratic form defined by $q(x) = x^\intercal A^\intercal A x$, for some invertible matrix $A \in \R^{n \times n}$.
We associate the lattice $\Lambda=A\Z^n$ to~$q$.
Analogously to the lattice case, the Voronoi cell of~$q$ is defined as
\[
\vc_q = \left\{ x \in \R^n : q(x) \leq q(x-z)\text{ for all }z \in \Z^n \right\}.
\]
This is a linear image of the Voronoi cell of~$\Lambda$ and thus a parallelotope.
For the sake of brevity we use the shorter notations $\cF_q = \cF_{\vc_q}$, $\cC_q = \cC_{\vc_q}$, $c(q)=c(\vc_q)$, and $\chi(q)=\chi(\vc_q)$.
The exact correspondences between the various notions in the languages of lattices and quadratic forms are as follows.

\begin{lem}
\label{lem:compactness-parallelotopes}
Let $q:\R^n\to\R$ be a positive definite quadratic form defined by $q(x)=x^\intercal A^\intercal A x$, for some invertible matrix $A \in \R^{n \times n}$.
Moreover, write $\Lambda=A\Z^n$ for the lattice generated by~$A$.
Then,
\begin{enumerate}[\indent $i)$]

 \item $\vc_\Lambda = A \, \vc_q$ and $\cF_\Lambda = A \, \cF_q$,

 \item $c(q)=c(\Lambda)$ and $\chi(q)=\chi(\Lambda)$.

\end{enumerate}
\end{lem}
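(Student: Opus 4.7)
The plan is to use the fact that writing $q(x)=x^\intercal A^\intercal A x = \|Ax\|^2$ makes the linear map $x \mapsto Ax$ an isometry from $(\R^n,q)$ to $(\R^n,\|\cdot\|)$ sending the lattice $\Z^n$ onto $\Lambda$. Under this map the entire Voronoi-cell / parallelotope structure used to define $\cF,\cC,c,\chi$ is transported from the quadratic-form side to the lattice side, and the two identities in the lemma become essentially bookkeeping.

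For part $i)$, I would start by checking directly from the definitions that $\vc_\Lambda = A\,\vc_q$. Writing $y = Ax$ and using that $\Lambda = A\Z^n$, one has
\[
y \in \vc_\Lambda \iff \|y\| \leq \|y - Az\| \text{ for all } z\in\Z^n \iff q(x) \leq q(x-z) \text{ for all } z\in\Z^n \iff x \in \vc_q.
\]
Because $A$ is a bijective linear map, it sends the face lattice of $\vc_q$ onto that of $\vc_\Lambda$. The facet of $\vc_\Lambda$ supported by the hyperplane $\{y: 2y^\intercal(Av)=\|Av\|^2\}$ corresponds exactly to the facet of $\vc_q$ supported by $\{x : 2x^\intercal A^\intercal Av = q(v)\}$, so the defining vector $v\in\Z^n$ on the $q$-side corresponds to the Voronoi relevant vector $Av\in\Lambda$. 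This proves $\cF_\Lambda = A\,\cF_q$ (and the analogous identity $\cC_\Lambda = A\,\cC_q$ for weakly relevant vectors, which is needed for the $\chi$ statement).

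For part $ii)$, the key observation is that the lattice of translations tiling $\R^n$ by $\vc_q$ is exactly $\Z^n$, so in the parallelotope language of the previous subsection, $\Lambda(\vc_q)=\Z^n$. Hence bases of $\Lambda(\vc_q)$ are columns of unimodular matrices $U$, while bases of $\Lambda = A\Z^n$ are precisely matrices $AU$ with $U$ unimodular. Given such a candidate basis $B=AU$ of $\Lambda$, the $c$-compactness condition $\cF_\Lambda \subseteq \{Bz : z\in\Z^n, \|z\|_\infty \leq c\}$ becomes, after applying $A^{-1}$ and using $\cF_q = A^{-1}\cF_\Lambda$, the condition $\cF_q \subseteq \{Uz : z\in\Z^n, \|z\|_\infty \leq c\}$, which is exactly the $c$-compactness of $U$ as a basis of $\Lambda(\vc_q)=\Z^n$. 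Taking the minimum $c$ on each side gives $c(\Lambda) = c(q)$. Replacing $\cF$ by $\cC$ throughout yields $\chi(\Lambda)=\chi(q)$.

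There is essentially no obstacle here; the only thing that requires a moment of care is making sure that bases of $\Lambda = A\Z^n$ are in bijection with bases of $\Z^n$ via multiplication by $A$, so that the minimization defining $c(\cdot)$ really runs over the ``same'' set of bases on both sides. Once that is noted, everything reduces to applying the linear bijection $A$ to both the Voronoi cell and the coefficient identities $v=\sum_i\alpha_i u_i \iff Av=\sum_i \alpha_i (Au_i)$.
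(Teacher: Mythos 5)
Your proof is correct and uses essentially the same approach as the paper: verify $\vc_\Lambda = A\,\vc_q$ by the substitution $y = Ax$, transport the facet structure (and hence $\cF$, $\cC$) under the bijective linear map $A$, and deduce part $ii)$ by noting that bases correspond via $U \mapsto AU$ with the same coordinate representations. The paper leaves $ii)$ as ``a direct consequence''; your explicit unimodular-matrix bookkeeping supplies the details but the underlying argument is identical.
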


\begin{proof}
For \romannumeral1), observe that
\begin{align*}
\vc_q &= \left\{ x \in \R^n : \| Ax \|^2 \leq \| A(x-z) \|^2 \text{ for all }z \in \Z^n \right\}\\
&= \left\{ A^{-1} y \in \R^n : \| y \|^2 \leq \| y - Az \|^2 \text{ for all }z \in \Z^n \right\} = A^{-1} \vc_\Lambda.
\end{align*}
For the second identity, notice that $x \in \cF_q$ if and only if $\vc_q \cap (\vc_q + x)$ is a facet of $\vc_q$, which holds if and only if $A \vc_q \cap (A \vc_q + Ax)$ is a facet of $A\vc_q = \vc_\Lambda$.
Part \romannumeral2) is a direct consequence of these observations.
\end{proof}

The lattice~$D_4 = \{z \in \Z^4 : z_1+\ldots+z_4 \in 2 \Z\}$ plays a crucial role in representing $4$-dimensional lattices whose Voronoi cell is not a zonotope, and thus deserves a detailed study.

\begin{lem}
\label{lem:d4-basis}
Let $y \in \cC_{D_4} \setminus \cF_{D_4}$.
Then there is a basis $B$ of $D_4$ such that
\[
\cC_{D_4} \setminus \{\pm y\} \subseteq \left\{ Bz : \norm{z}_\infty \leq 1 \right\}.
\] 
\end{lem}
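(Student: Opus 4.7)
The plan is to leverage the rich automorphism group of $D_4$ to reduce the lemma to a single representative of $y$, and then to construct the desired basis $B$ explicitly via a case analysis.

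First I would perform a symmetry reduction. The set $\cC_{D_4}\setminus\cF_{D_4}$ consists of $24$ vectors -- the eight $\pm 2 e_i$ and the sixteen sign vectors in $\{\pm 1\}^4$ -- and splits into three orbits of size $8$ under the Weyl group $W(D_4)$: the pairs $\pm 2 e_i$, the sign vectors with an even number of negative entries, and those with an odd number of negative entries. These three orbits correspond to the three non-trivial cosets of $D_4/2D_4$ with multiple shortest pairs, and they are cyclically permuted by the triality outer automorphism of $D_4$, an order-three orthogonal lattice isometry $T\in\mathrm{Aut}(D_4)$. Hence $\mathrm{Aut}(D_4)$ acts transitively on $\cC_{D_4}\setminus\cF_{D_4}$, and it suffices to construct the basis $B$ for one representative of $y$; the remaining cases follow by transporting $B$ through $T$ or $T^2$.

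Having fixed a representative of $y$, I would start from the compact basis $B_0=\{e_1+e_4,\,e_2-e_1,\,e_3-e_2,\,e_4-e_3\}$ used in the proof of \autoref{prop:compactness}~\emph{iii)}, which already represents every facet vector as a $\{-1,0,1\}$-combination. A direct computation yields $2 e_k = \sum_{i=1}^{4}\varepsilon_i^{(k)}\,b_i$ with $\varepsilon_i^{(k)}\in\{\pm 1\}$, so every $\pm 2 e_i$ is compactly represented. For a sign vector $(\epsilon_1,\ldots,\epsilon_4)\in\{\pm 1\}^4$, inverting $B_0$ gives the closed form $a_2=(\epsilon_4-\epsilon_1+\epsilon_2+\epsilon_3)/2$, $a_1=a_2+\epsilon_1$, $a_3=a_2-\epsilon_2$, $a_4=a_2-\epsilon_2-\epsilon_3$. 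From these formulas one reads off the sign vectors on which some $|a_j|$ reaches $2$; the last task is to arrange -- by adjusting $B_0$ (e.g.\ swapping an outer basis vector for another Voronoi relevant vector) or by applying the appropriate triality image $T^j$ -- that this ``bad set'' coincides exactly with $\{\pm y\}$ inside the orbit of $y$.

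The chief technical obstacle is precisely this matching step: the basis's failure set must collapse to a single pair after the chosen symmetry transport, and the four coordinate bounds $|a_j|\le 1$ must hold simultaneously across the $22$ remaining pairs in $\cC_{D_4}\setminus\{\pm y\}$. The verification is a finite but intricate case-check, organized by coset modulo $2 D_4$. The triality transport step itself requires confirming that $T$ is a lattice automorphism of $D_4$ -- which is routine from its action on the simple roots, cyclically permuting $\alpha_1,\alpha_3,\alpha_4$ and fixing $\alpha_2$ -- so that a $\{-1,0,1\}$-expansion in $B$ translates into one in $T^j(B)$ and the ``excluded'' pair moves correctly between orbits.
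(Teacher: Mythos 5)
Your enumeration of $\cC_{D_4}\setminus\cF_{D_4}$ as $24$ vectors including the eight $\pm 2e_i$ does not match the characterization the paper's proof is built on, namely $\cC_{D_4}=\{z\in\{0,\pm1\}^4:\norm{z}^2\in\{2,4\}\}$, which gives $\cC_{D_4}\setminus\cF_{D_4}=\{\pm1\}^4$ (only $16$ vectors). The difference is fatal for your plan: with the $\pm 2e_i$ included, the lemma is false, so no case-check can close the argument. For $y\in\{\pm1\}^4$, a basis $B$ of $D_4$ with $\cC_{D_4}\setminus\{\pm y\}\subseteq\{Bz:\norm{z}_\infty\leq1\}$ would have a dual basis $\{b_1^\star,\ldots,b_4^\star\}\subseteq D_4^\star$ satisfying $\vert(b_k^\star)^\intercal(2e_i)\vert\leq 1$ for all $i$, forcing each $b_k^\star$ into $\tfrac12[-1,1]^4$, and the only nonzero dual lattice vectors there are the vectors $(\pm\tfrac12,\ldots,\pm\tfrac12)$. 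But then $(b_k^\star)^\intercal(2b_k^\star)=2$ with $2b_k^\star\in\{\pm1\}^4$, so each $b_k^\star$ would have to equal $\pm\tfrac12 y$, leaving no four independent choices; a similar obstruction arises when $y=\pm 2e_i$. So the first thing to settle is the exact content of $\cC_{D_4}$, and the set your proposal starts from is incompatible with the statement you are asked to prove.

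Even setting that aside, the proposal does not reach a proof: the triality reduction and the coefficient parametrization are only a setup, and the decisive ``matching step'' is explicitly flagged as an unresolved obstacle rather than carried out. The paper also proceeds quite differently and avoids coefficient arithmetic entirely: it records the facet description $Q=\conv(\cC_{D_4}\setminus\{\pm y\})=\{x:\vert e_i^\intercal x\vert\leq 1,\ \vert y^\intercal x\vert\leq 2\}$, passes to the polar $Q^\star=\conv\{\pm e_1,\ldots,\pm e_4,\pm\tfrac12 y\}$, and exhibits the dual basis $\{\tfrac12 y,e_1,e_2,e_3\}\subseteq Q^\star$, which by Lemma~\ref{lem:primal_dual} is exactly what is required.
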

\begin{proof}
We start by characterizing the sets $\cF_{D_4}$ and $\cC_{D_4}$.
Since $\pm 2e_i \pm 2e_j \in 2D_4 \subseteq 2\Z^4$ for $1 \leq i < j \leq 4$, it follows that $S = \{z \in \{0, \pm1\}^4 : \norm{z}^2 \in \{2,4\} \}$ contains all vectors $v \neq 0$ that are shortest in their respective co-set $v + 2D_4$.
In fact, due to parity of the coefficients, we have $S = \cC_{D_4}$.
In the proof of Proposition~\ref{prop:compactness}, we saw that $\cF_{D_4} \subseteq \{z \in D_4 : \norm{z}^2 = 2\}$.
For parity reasons of $z \in \cF_{D_4}$, the vectors $z$ and $-z$ are the unique shortest vectors in $z + 2D_4$, hence we actually have $\cF_{D_4} = \{z \in D_4 : \norm{z}^2 = 2\}$.

Now, let $y \in \cC_{D_4} \setminus \cF_{D_4}$ and observe that $D_4^\star = \Z^4 \cup (\frac12 \one + \Z^4)$.
Then
\[
\left\{x \in \R^4 : \,  |e_i^\intercal x| \leq 1, 1 \leq i \leq 4 \right\} \cap \left\{ x \in \R^4 : \,  |y^\intercal x| \leq 2 \right\}
\]
is a facet-description of $Q := \conv \{ \cC_{D_4} \setminus \{\pm y\} \}$.
The inequality $|y^\intercal x| \leq 2$ arises since the vectors $y_i e_i + y_j e_j$, $1 \leq i < j \leq 4$ are contained in $\cC_{D_4} \setminus \{\pm y\}$.
Taking polars, we obtain that
\[
Q^\star = \conv\left(\left\{ \pm e_i : i=1,\dots,4\right\} \cup \{\pm \tfrac12 y\}\right),
\]
and we see that $Q^\star$ contains the dual lattice basis $B^{-\intercal} = \{ \frac12 y, e_1, e_2, e_3 \}$.
Hence, in the spirit of Lemma~\ref{lem:primal_dual}, every vector in $\cC_{D_4} \setminus \{\pm y\}$ is represented with coefficients in $\{\pm 1, 0\}$ in the corresponding primal basis~$B$.
\end{proof}

Observe that Lemma~\ref{lem:d4-basis} is best possible in the sense that $\chi(D_4) = 2$.\footnote{Engel et al.~\cite{engelmichelsenechal2001newgeometric} claim that they computed $\chi(D_4) = 1$, which turns out to be wrong.}
In order to see this, $\conv (\cC_{D_4})^\star$ is the standard crosspolytope, which does not contain a basis of~$D_4^\star$ as any such basis has to contain a vector in $\frac12 \one + \Z^4$. However, after dilating by $2$, we find the basis $\{\frac12 \one,e_2,e_3,e_4\}$ of~$D_4^\star$ (cf.~Lemma~\ref{lem:primal_dual}.)

We now arrive at the desired compactness of four-dimensional lattices.

\begin{cor}
\label{cor:4-dim-compact}
Every lattice of rank at most four has a compact basis.
\end{cor}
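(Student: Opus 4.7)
For $n \leq 3$, Proposition~\ref{prop:compactness}~ii) already provides a compact basis, so the substantive case is $n = 4$. My plan is to split according to whether $\vc_\Lambda$ is a zonotope; in the zonotopal case, Theorem~\ref{thm:zonotopal} directly supplies a compact basis, so all the work lies in the non-zonotopal case. There, I will combine the classification of four-dimensional parallelotopes with Proposition~\ref{prop:zonotopal-part} and Lemma~\ref{lem:d4-basis}.

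The key external input I will invoke is a structural fact from the theory of four-dimensional parallelotopes (cf.\ \cite{dutourgrishukhinmagazinov2014onthesum}): every non-zonotopal four-dimensional parallelotope decomposes as a Minkowski sum $\vc_\Lambda = P + Z(U)$, where $P$ is linearly equivalent to $\vc_{D_4}$ and $U \subseteq \R^4$ is a finite set. Assuming this, Proposition~\ref{prop:zonotopal-part} delivers a linear bijection $\varphi$ with $\varphi(\Lambda(P)) = \Lambda$ satisfying $\varphi(\cF_P) \subseteq \cF_\Lambda$ and $\varphi^{-1}(\cF_\Lambda) \subseteq \cC_P$. After identifying $\Lambda(P) \cong D_4$, Lemma~\ref{lem:d4-basis} tells me that for every $y \in \cC_{D_4} \setminus \cF_{D_4} = \{\pm 2 e_i : 1 \leq i \leq 4\}$ there is a basis $B$ of $D_4$ with $\cC_{D_4} \setminus \{\pm y\} \subseteq \{Bz : \|z\|_\infty \leq 1\}$.

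The plan is then to choose $y$ so that $\varphi^{-1}(\cF_\Lambda) \subseteq \cC_{D_4} \setminus \{\pm y\}$: then every $v \in \cF_\Lambda$ has $\varphi^{-1}(v)$ a $\{-1,0,1\}$-combination of $B$, hence $v$ itself is the same combination of $\varphi(B)$, making $\varphi(B)$ a compact basis of $\Lambda$. When $U = \emptyset$, $\vc_\Lambda$ is linearly equivalent to $\vc_{D_4}$, so $\cF_\Lambda$ is the image of $\cF_{D_4}$, which avoids every $\pm 2 e_i$, and any choice of $y$ works.

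The main obstacle is to verify, for $U \neq \emptyset$, that at least one of the four pairs $\pm 2 e_i$ fails to appear as $\varphi^{-1}(v)$ for some $v \in \cF_\Lambda$. I would attack this via the explicit form $A_u(x) = x + 2(e_u^\intercal x) u$ from the proof of Proposition~\ref{prop:zonotopal-part}, arguing that each non-trivial summand $[-u,u]$ must ``absorb'' at least one of the long diagonals $\pm 2 e_i$ (i.e., turn it into a strictly-standard-but-not-facet vector of $\vc_\Lambda$) while introducing new facets aligned with $\pm u$. As a safety net, one can fall back on a case-by-case verification against the (finite) list of non-zonotopal four-dimensional $L$-types, since only finitely many combinatorial possibilities for $\vc_\Lambda$ arise.
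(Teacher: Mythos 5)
Your proposal reproduces the paper's architecture faithfully up to the last step: the $n\le 3$ reduction, the zonotopal dichotomy via Theorem~\ref{thm:zonotopal}, the decomposition $\vc_\Lambda = P + Z(U)$ with $P$ a copy of the $24$-cell, and the combination of Proposition~\ref{prop:zonotopal-part} with Lemma~\ref{lem:d4-basis}. However, the single step on which everything hinges --- exhibiting a $y \in \cC_{D_4}\setminus\cF_{D_4}$ with $\pm y \notin \varphi^{-1}(\cF_\Lambda)$ --- is left as two sketched plans (track the maps $A_u$; fall back on an $L$-type enumeration), neither of which is carried out. The paper closes this gap with a one-line counting argument you did not find: since $\vc_\Lambda$ is a Voronoi cell in $\R^4$, $|\cF_\Lambda|\le 2(2^4-1)=30$, whereas $|\cC_{D_4}|=40$; since $\varphi$ is a bijection, $\varphi^{-1}(\cF_\Lambda)$ has at most $30$ elements and so cannot exhaust $\cC_{D_4}$, and as $\cF_{D_4}\subseteq\varphi^{-1}(\cF_\Lambda)$ by Proposition~\ref{prop:zonotopal-part}~$ii)$, any leftover vector automatically lies in $\cC_{D_4}\setminus\cF_{D_4}$. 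Your proposed ``absorption'' heuristic would require establishing, for each free direction $u$, a precise bookkeeping of which standard vectors cease to be facet vectors; the counting argument sidesteps all of this.

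A secondary error: you write $\cC_{D_4}\setminus\cF_{D_4}=\{\pm 2e_i : 1\le i\le 4\}$, but from the characterization in the proof of Lemma~\ref{lem:d4-basis} one has $\cC_{D_4}=\{z\in\{0,\pm1\}^4:\|z\|^2\in\{2,4\}\}$ and $\cF_{D_4}=\{z\in D_4:\|z\|^2=2\}$, so $\cC_{D_4}\setminus\cF_{D_4}=\{\pm1\}^4$, a set of $16$ vectors (the long diagonals of the unit cube), not the $8$ vectors $\pm 2e_i$. This matters because the counting $40$ versus $30$ that rescues the argument depends on the correct tally $|\cC_{D_4}|=24+16=40$.
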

\begin{proof}
We have seen in Proposition~\ref{prop:compactness}~$ii)$, that every lattice of rank at most three has a compact basis.
Thus, let $\Lambda = A \Z^4$ be a full-dimensional lattice, and let $q(x)=x^\intercal A^\intercal A x$ be the corresponding quadratic form.
In the case that~$\vc_q$ is a zonotope, we use Lemma~\ref{lem:compactness-parallelotopes} to get that $\vc_\Lambda= A \, \vc_q$ is a zonotope as well, and thus Theorem~\ref{thm:zonotopal} implies that $c(\Lambda)=1$.

If $\vc_q$ is not a zonotope, then Voronoi's reduction theory as applied in Vallentin's thesis~\cite[Ch.~3]{vallentin2003thesis} shows the following:
We can write $\vc_q = \vc_p + Z(U)$, for some positive definite quadratic form $p$ and a set of vectors $U \subseteq \R^4$.
Moreover, $p$ is such that $\vc_p$ is combinatorially equivalent to the $24$-cell.
Up to isometries and scalings, the only lattice whose Voronoi cell is combinatorially equivalent to the $24$-cell is the root lattice $D_4$, defined in Proposition~\ref{prop:compactness}.
This is due to the fact that $D_4$ is what is called a \emph{rigid} lattice.
Therefore, any lattice corresponding to~$p$ agrees with~$D_4$ up to isometries and scalings.

By Lemmas~\ref{lem:compactness-parallelotopes} and~\ref{lem:d4-basis}, this means that for every vector $y \in \cC_{p} \setminus \cF_{p}$, we can find a basis $B$ of~$\Lambda_p:=\Lambda(\vc_p)$ such that every standard vector of~$\vc_p$ apart from $\pm y$ can be represented with coefficients in $\{\pm 1,0\}$.
By the first part of Proposition~\ref{prop:zonotopal-part}, there is a linear map~$\varphi$ such that $\varphi(\Lambda_p) = \Lambda_q :=\Lambda(\vc_q)$, and $\varphi^{-1}(\cF_{q}) =: \cC^\prime \subseteq \cC_p$.
By the second part of Proposition~\ref{prop:zonotopal-part}, we have $\cF_p \subseteq \cC^\prime$.
Since $A\vc_q$ is a Voronoi cell, we have $|\cC^\prime| = |\cF_{q}| \leq 2(2^4 -1) = 30$, whereas $|\cC_p| = |\cC_{D_4}| = 40$ (see the proof of Lemma~\ref{lem:d4-basis}).
Hence we can choose $y \in \cC_{p} \setminus \cC^\prime$, and 
find a basis $B$ of~$\vc_p$ so that all vectors in~$\cC^\prime$ are represented with coefficients in $\{0, \pm 1\}$.
This implies that all vectors in~$\cF_q$ have coefficients in $\{0, \pm 1\}$ when represented in the basis $\varphi(B)$ of~$\vc_q$.
Thus, the lattice~$\Lambda$ has a compact basis as $c(q)=c(\Lambda)$.
\end{proof}

\noindent We summarize the results of this section in Table~\ref{table:compactness}.

\begin{table}[hbt]
\centering
\begin{tabular}{c|l|l}
dimension of $\Lambda$ & \multicolumn{1}{c|}{compactness result}                          & \multicolumn{1}{c}{reference}                                \\ \hline\hline
$n \leq 3$               & $c(\Lambda)=\chi(\Lambda)=1$                & Prop.~\ref{prop:compactness} \& Cor.~\ref{cor:zonotopes-chi-compact}\\ \hline
$n = 4$                    & $c(\Lambda)=1$, but $\chi(D_4)=2$           & Corollary~\ref{cor:4-dim-compact}                   \\ \hline
$n \geq 5$               & $c(\Lambda_n) \geq \lceil\frac{n}{4}\rceil$ & Theorem~\ref{thm:lower_bound}    
\end{tabular}
\caption{Compactness of lattices in small dimensions.}
\label{table:compactness}
\end{table}

\section{Relaxing the basis condition}
\label{sec:relaxed}

The compact representation problem for the set of Voronoi relevant vectors does not need $B$ to be a basis of the lattice $\Lambda$.
In fact, it suffices that we find linearly independent vectors $W=\{w_1,\ldots,w_n\}$ that allow to decompose each Voronoi relevant vector as an integer linear combination with small coefficients.
This is due to the fact that, given a basis, membership to a lattice can be checked in polynomial time.
Thus, in case that the relaxation improves the compactness of the presentation, this additional check is faster than iterating over the larger set corresponding to a~$c(\Lambda)$-compact basis.

\begin{definition}
\label{def:c-reduced_lin_ind}
Let $\Lambda \subseteq \R^n$ be a lattice.
A set of linearly independent vectors $W = \{w_1,\ldots,w_n\} \subseteq \R^n$ is called \emph{$c$-compact for $\Lambda$}, if
\[
\cF_\Lambda \subseteq \left\{ w_1z_1+\ldots+w_nz_n : z \in \Z^n, \, \| z \|_\infty \leq c \right\}.
\]
Moreover, we define
\[
\bar c(\Lambda) = \min\{c \geq 0 : \text{there is a }c\text{-compact set }W\text{ for }\Lambda\}
\]
as the \emph{relaxed compactness constant} of $\Lambda$.
\end{definition}

If every Voronoi relevant vector is an integral combination of $W$, then so is \emph{every} lattice vector.
That is, a $c$-compact set $W$ for $\Lambda$ gives rise to a superlattice $\Gamma = W\Z^n \supseteq \Lambda$.
The relaxed compactness constant and $c(\Lambda)$ are related as follows.

\begin{prop}
\label{prop:c_vs_bar_c}
For every lattice $\Lambda$ in $\R^n$, $n \geq 2$, we have
\[
\bar c(\Lambda) = \lambda_n(Q^\star,\Lambda^\star) \qquad \text{and} \qquad \bar c(\Lambda) \leq c(\Lambda) \leq \frac{n}{2} \, \bar c(\Lambda),
\]
where $Q = \conv( \cF_\Lambda )$ as before.
\end{prop}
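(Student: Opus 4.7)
The plan is to recognize both $c(\Lambda)$ and $\bar c(\Lambda)$ as successive-minima-type invariants of the pair $(Q^\star,\Lambda^\star)$, and then reduce the non-trivial inequality to a classical Mahler-type basis-extraction theorem.

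For the equality $\bar c(\Lambda)=\lambda_n(Q^\star,\Lambda^\star)$, I will mimic the polarity argument of Lemma~\ref{lem:primal_dual}~ii), relaxing ``basis'' to ``linearly independent set''. Given a linearly independent set $W=\{w_1,\ldots,w_n\}$, let $\{w_1^\star,\ldots,w_n^\star\}$ be its dual system defined by $(w_i^\star)^\intercal w_j=\delta_{ij}$. Since the Voronoi relevant vectors generate $\Lambda$ as a lattice, the inclusion $\cF_\Lambda\subseteq W\Z^n$ forces $\Lambda\subseteq W\Z^n$, which dualizes to $\{w_1^\star,\ldots,w_n^\star\}\subseteq\Lambda^\star$; and the coefficient bound $\|W^{-1}v\|_\infty\leq c$ translates, exactly as in Lemma~\ref{lem:primal_dual}, to $w_i^\star\in cQ^\star$. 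Conversely, any $n$ linearly independent vectors of $\Lambda^\star\cap cQ^\star$ yield, by inverse transposition, a $c$-compact set. Hence $\bar c(\Lambda)$ is the smallest $c$ for which $cQ^\star$ contains $n$ linearly independent vectors of $\Lambda^\star$, which is $\lambda_n(Q^\star,\Lambda^\star)$ by definition.

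The inequality $\bar c(\Lambda)\leq c(\Lambda)$ is immediate, since a basis of $\Lambda$ is in particular a linearly independent set. For the upper bound $c(\Lambda)\leq\tfrac{n}{2}\bar c(\Lambda)$, Lemma~\ref{lem:primal_dual}~ii) identifies $c(\Lambda)$ with the basis-covering analogue $\lambda_n^b(Q^\star,\Lambda^\star)$ of the last successive minimum, namely the smallest dilate of $Q^\star$ that contains an entire basis of $\Lambda^\star$. The statement then reduces to the classical Mahler-type theorem (see Gruber~\cite[\S 23]{gruber2007convex}): for every $0$-symmetric convex body $K$ and lattice $L$ in $\R^n$, there is a basis $b_1,\ldots,b_n$ of $L$ with $b_i\in \max\{1,i/2\}\,\lambda_i(K,L)\cdot K$ for each~$i$. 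Specialized to $K=Q^\star$ and $L=\Lambda^\star$, and using $\lambda_i\leq\lambda_n$, every $b_i$ lies in $\tfrac{n}{2}\lambda_n(Q^\star,\Lambda^\star)\cdot Q^\star$ whenever $n\geq 2$, giving the desired estimate. I expect the main non-trivial input to be precisely this basis-extraction theorem; the remaining work is a careful translation between the combinatorial definition of~$\bar c$, the geometric containment $\Lambda^\star\cap cQ^\star$, and the language of successive minima.
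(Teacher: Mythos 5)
Your proof is correct and follows essentially the same route as the paper: the polarity argument from Lemma~\ref{lem:primal_dual} identifies $\bar c(\Lambda)$ with $\lambda_n(Q^\star,\Lambda^\star)$, and the bound $c(\Lambda)\leq\tfrac{n}{2}\bar c(\Lambda)$ reduces to extracting a basis of $\Lambda^\star$ from $n$ linearly independent vectors inside $\bar c(\Lambda)\,Q^\star$. The only difference is that you invoke the classical Mahler-type basis-extraction theorem as a black box, whereas the paper reproves exactly that step from scratch via an inductive construction with the crosspolytopes $C_k=\conv\{\pm v_1,\ldots,\pm v_k\}\subseteq\bar c(\Lambda)\,Q^\star$.
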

\begin{proof}
The identity $\bar c(\Lambda) = \lambda_n(Q^\star,\Lambda^\star)$ follows by arguments analogous to those establishing the equivalence of $i)$ and $ii)$ in Lemma~\ref{lem:primal_dual}.
The inequality $\bar c(\Lambda) \leq c(\Lambda)$ is a direct consequence of the definition of these parameters.

In order to prove that $c(\Lambda) \leq \frac{n}{2} \, \bar c(\Lambda)$, we let $v_1,\ldots,v_n \in (\bar c(\Lambda) \cdot Q^\star) \cap \Lambda^\star$ be linearly independent, and for $1 \leq k \leq n$, we consider the crosspolytope $C_k = \conv\{\pm v_1,\ldots,\pm v_k\}$.
We show by induction that there are vectors $u_1,\ldots,u_n \in \Lambda^\star$ such that (a) $\{u_1,\ldots,u_k\}$ is a basis of the lattice $\Lambda^\star \cap \lin\{v_1,\ldots,v_k\}$, and (b) $u_k \in \max\{\frac{k}{2},1\} \cdot C_k$, for every $1 \leq k \leq n$.
This then implies that $\{u_1,\ldots,u_n\}$ is a basis of $\Lambda^\star$ contained in $\frac{n}{2} \, C_n \subseteq \frac{n}{2} \, \bar c(\Lambda) \, Q^\star$.
Hence, $c(\Lambda) \leq \frac{n}{2} \, \bar c(\Lambda)$, as desired.

First, at least one of the vectors $v_1,\ldots,v_n$ must be primitive, say $v_1$.
Then, setting $u_1 = v_1$ gets the induction started.
Now, let us assume that we found $u_1, \dots, u_{k-1}$ satisfying (a) and~(b).
Let $y \in (\Lambda^\star \cap \lin\{v_1,\ldots,v_k\})^\star$ be a primitive vector orthogonal to $\lin\{u_1,\ldots,u_{k-1}\}$ and such that $y^\intercal v_k \neq 0$.
If $|y^\intercal v_k| = 1$, then $u_k = v_k \in C_k$ complements $\{u_1,\ldots,u_{k-1}\}$ to a basis of $\Lambda^\star \cap \lin\{v_1,\ldots,v_k\}$.
So, we may assume that $|y^\intercal v_k| \geq 2$.
Every translate of $\frac{k-1}{2} \, C_{k-1}$ within $\lin\{v_1,\ldots,v_{k-1}\}$ contains a point of $\Lambda^\star$.
In particular, there is a vector $u_k \in \Lambda^\star$ contained in $\frac{1}{y^\intercal v_k} \, v_k + \frac{k-1}{2} \, C_{k-1}$.
By construction, $u_k$ complements $\{u_1,\ldots,u_{k-1}\}$ to a basis of $\Lambda^\star \cap \lin\{v_1,\ldots,v_k\}$, and since $\vert \frac{1}{y^\intercal v_k} \vert \leq \frac12$, we get that $u_k \in \frac12 \, C_k + \frac{k-1}{2} \, C_{k-1} \subseteq \frac{k}{2} \, C_k$.
\end{proof}

The relaxation to representing $\cF_\Lambda$ by generating sets rather than by lattice bases may reduce the respective compactness constant drastically.
In fact, the quadratic upper bound in Theorem~\ref{thm:cn-upper-bound} improves to $\mathcal{O}(n \log{n})$.
However, there is still a class of lattices that shows that in the worst case the relaxed compactness constant can be linear in the dimension as well.
In combination with Theorem~\ref{thm:lower_bound}, the second part of the following result moreover shows that the factor $n/2$ in Proposition~\ref{prop:c_vs_bar_c} is tight up to a constant.

\begin{prop}
\label{prop:lambda<2}\
\begin{enumerate}[\indent $i)$]
 \item For every lattice $\Lambda \subseteq \R^n$, we have $\bar c(\Lambda) \in \mathcal{O}(n \log{n})$.

 \item For $a=\lceil \frac{n}{2} \rceil$, let $\Lambda_n = \Lambda_n(a)$ be the lattice defined in~\eqref{eqn:Lambda_n}.
 For every $n \in \N$, we have $\bar c(\Lambda_n) \leq 3$, whereas $c(\Lambda_n) \geq \lceil \frac{n}{4} \rceil$, for $n \geq 4$.

 \item There are self-dual lattices $\Lambda \subseteq \R^n$ with relaxed compactness constant $\bar c(\Lambda) \in \Omega (n)$.
\end{enumerate}
\end{prop}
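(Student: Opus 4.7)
Throughout, my workhorse will be the identity $\bar c(\Lambda) = \lambda_n(Q^\star, \Lambda^\star)$ from Proposition~\ref{prop:c_vs_bar_c}, where $Q = \conv(\cF_\Lambda)$.

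\emph{Part $(i)$.}
Since every Voronoi relevant vector lies on $\partial(2\vc_\Lambda)$, we have $Q \subseteq 2\vc_\Lambda$, hence $Q^\star \supseteq \tfrac12 \vc_\Lambda^\star$, and therefore $\bar c(\Lambda) \leq 2\,\lambda_n(\vc_\Lambda^\star, \Lambda^\star)$. The remaining ingredient is Banaszczyk's transference inequality for a general centrally symmetric convex body $K$, namely $\lambda_1(K, \Lambda)\cdot\lambda_n(K^\star, \Lambda^\star)\in\mathcal{O}(n\log n)$. Applied with $K = \vc_\Lambda$, and using $\lambda_1(\vc_\Lambda, \Lambda) = 2$ (because $\vc_\Lambda$ tiles $\R^n$ by $\Lambda$), this immediately gives $\bar c(\Lambda)\in\mathcal{O}(n\log n)$.

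\emph{Part $(ii)$.}
The lower bound $c(\Lambda_n)\geq\lceil n/4\rceil$ is exactly Theorem~\ref{thm:lower_bound}. For the upper bound $\bar c(\Lambda_n)\leq 3$, I would exhibit $n$ linearly independent vectors of $\Lambda_n^\star$ lying in $3\,Q^\star$. Take $y_i = (e_{i+1}-e_1)/a$ for $i=1,\ldots,n-1$; using the explicit shape of $\cF_{\Lambda_n}$ from Lemma~\ref{lem:vor-rel-vectors} one immediately checks $v^\intercal y_i \in \{-1,0,1\}$ for every Voronoi relevant $v$, and each $y_i\in\Lambda_n^\star$ because $\one^\intercal y_i = 0$. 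For a vector outside $\one^\perp$, take $y_n = \one/a$ when $n$ is even and $y_n = (\one+e_1)/a$ when $n$ is odd; in either case $\one^\intercal y_n = 2 \in \Z$, so $y_n\in\Lambda_n^\star$. A brief case analysis on the parity of $|S|$ and on whether $1\in S$ (using $\ell\in\{\lfloor a|S|/n\rfloor,\lceil a|S|/n\rceil\}$) yields $|v^\intercal y_n|\leq 2\leq 3$ uniformly, completing the construction.

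\emph{Part $(iii)$.}
To exhibit a self-dual $\Lambda$ with $\bar c(\Lambda)\in\Omega(n)$, my plan is to use a \emph{well-rounded} unimodular lattice. Concretely, a Minkowski--Hlawka averaging argument restricted to the space of self-dual lattices produces $\Lambda = \Lambda^\star \subseteq \R^n$ with $\lambda_1(\Lambda)\in\Omega(\sqrt n)$ and, crucially, with the Voronoi relevant vectors spread over the sphere densely enough that $Q = \conv(\cF_\Lambda)$ contains a Euclidean ball of radius $\Omega(\sqrt n)$. For such $\Lambda$, the polar satisfies $Q^\star \subseteq \mathcal{O}(1/\sqrt n)\,B_n$, so any vector of $\Lambda$ contained in $c\,Q^\star$ has Euclidean norm at most $\mathcal{O}(c/\sqrt n)$; since $\lambda_n(B_n,\Lambda)\geq\lambda_1(\Lambda)\in\Omega(\sqrt n)$, fitting $n$ linearly independent lattice vectors into $c\,Q^\star$ forces $c\in\Omega(n)$, whence $\bar c(\Lambda) = \lambda_n(Q^\star,\Lambda)\in\Omega(n)$. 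The main obstacle is establishing the well-roundedness of $Q$ for a self-dual lattice: the naive candidate $\Lambda_n\oplus\Lambda_n^\star$ would only yield $\bar c = \max(\bar c(\Lambda_n),\bar c(\Lambda_n^\star))$, where the second term need not be large, so the self-duality constraint must be handled with a careful Siegel-type mean-value computation (or, alternatively, by an explicit construction from a self-dual code of high minimum distance).
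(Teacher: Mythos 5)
Parts $(i)$ and $(ii)$ are correct and essentially match the paper's argument; in $(i)$ the paper applies Banaszczyk directly to $Q$ using $\lambda_1(Q,\Lambda)=1$, whereas you route through $\vc_\Lambda$ using $Q\subseteq 2\vc_\Lambda$, but the two are equivalent up to bookkeeping, and in $(ii)$ your choice of dual vectors is a coordinate permutation of the paper's.

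Part $(iii)$, however, contains a genuine gap, and the obstacle you flag is in fact an unnecessary detour. You want $Q=\conv(\cF_\Lambda)$ to contain a Euclidean ball of radius $\Omega(\sqrt n)$, and you admit you cannot establish this. There is no need for any such well-roundedness. Observe that $Q^\star$ and $2\vc_\Lambda$ have \emph{the same facet normals}: each $v\in\cF_\Lambda$ gives the facet $v^\intercal x\le 1$ of $Q^\star$ and the facet $v^\intercal x\le\|v\|^2$ of $2\vc_\Lambda$. Since $\|v\|^2\ge\lambda_1(B_n,\Lambda)^2$ for every $v\in\cF_\Lambda$, any $c<\lambda_1(B_n,\Lambda)^2$ forces $c\,Q^\star\subseteq\inter(2\vc_\Lambda)$. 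For a self-dual lattice $\Lambda=\Lambda^\star$, the interior of $2\vc_\Lambda=2\vc_{\Lambda^\star}$ contains no nonzero point of $\Lambda^\star$ (for $w\in\Lambda^\star\setminus\{0\}$, the point $w/2$ is not interior to $\vc_{\Lambda^\star}$, so some facet inequality of $2\vc_{\Lambda^\star}$ is tight or violated at $w$). Hence $\bar c(\Lambda)=\lambda_n(Q^\star,\Lambda^\star)\ge\lambda_1(B_n,\Lambda)^2$, and the Conway--Thompson self-dual lattices, whose minimal norm is $\Omega(\sqrt n)$, finish the proof. This uses the same existence result you invoke, but dispenses entirely with any claim about the distribution of Voronoi relevant vectors on the sphere; that is the missing idea.
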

\begin{proof}
$i)$: The polytope $Q = \conv(\cF_\Lambda)$ is centrally symmetric, all its vertices are points of $\Lambda$, and $\inter(Q) \cap \Lambda = \{0\}$.
Therefore, we have $\lambda_1(Q,\Lambda) = 1$.
Proposition~\ref{prop:c_vs_bar_c} and the transference theorem of Banaszczyk~\cite{banaszczyk1996inequalities} thus imply that there is an absolute constant $\gamma > 0$ such that
\begin{align}
\bar c(\Lambda) = \lambda_n(Q^\star,\Lambda^\star) = \lambda_1(Q,\Lambda) \cdot \lambda_n(Q^\star,\Lambda^\star) \leq \gamma\,n\log{n}.\label{eqn:flatness_thm}
\end{align}

$ii)$: In view of Proposition~\ref{prop:c_vs_bar_c}, we have to find $n$ linearly independent points of $\Lambda_n^\star$ in $3 \, Q^\star$.
To this end, we define $y_i := \frac{1}{a}(e_i-e_n)$, for $1 \leq i \leq n-1$.
Furthermore, let $y_n=\frac{1}{a}\one$, if $n$ is even, and $y_n=\left(\{\frac{1}{a}\}^{n-1},\frac{2}{a}\right)$, if $n$ is odd.
We claim that the vectors $y_1,\ldots,y_n$ do the job.

First of all, they are clearly linearly independent, and the description~\eqref{eqn:dualLambda_n} shows that all these vectors belong to $\Lambda_n^\star$.
Now, recall that $Q^\star = \{y \in \R^n : y^\intercal v \leq 1 \text{ for all } v \in \cF_{\Lambda_n}\}$.
By Lemma~\ref{lem:vor-rel-vectors}, a Voronoi relevant vector $v$ of $\Lambda_n$ either equals $\pm \one$ or is contained in $v \in \{a-\ell,-\ell\}^n$, for some suitable $\ell \in \N$.
Consider first the vectors $y_i$, for $1 \leq i \leq n-1$.
We have $\one^\intercal y_i = 0$, and for any $v \in \{a-\ell,-\ell\}^n$ holds $v^\intercal y_i = \frac{1}{a}(v_i - v_n)$ which equals $0$, if $v_i = v_n$, and it equals $\pm 1$, if $v_i \neq v_n$.
Thus, in fact $y_1,\ldots,y_{n-1} \in Q^\star$.

Regarding the remaining vector $y_n$, we observe that $\one^\intercal y_n = 2$, independently of the parity of the dimension~$n$.
Thus, let $v \in \{a-\ell,-\ell\}^n$, and note that $\ell \in \{\lfloor \frac{ak}{n} \rfloor, \lceil \frac{ak}{n} \rceil\}$, where $k=|\{i : v_i=a-\ell\}|$.
Since $-\ell \leq a$ and $a-\ell \leq a$, we have
\begin{align*}
y_n^\intercal v &\leq \tfrac{1}{a} \left( k(a-\ell) - (n-k)\ell + a \right) = \tfrac{1}{a} \left( ka - n \ell + a \right) \\
&\leq \tfrac{1}{a} \left( ka - n (\tfrac{ak}{n} - 1) + a \right) = \frac{n+a}{a} \leq 3,
\end{align*}
and similarly $y_n^\intercal v \geq -3$.
Hence, $y_n \in 3 \, Q^\star$, finishing the proof.

$iii)$: Let $\Lambda$ be a self-dual lattice and let $\vc_\Lambda$ be its Voronoi cell.
Each Voronoi relevant vector~$v \in \cF_\Lambda$ provides a facet of $\vc_\Lambda$ via the inequality $v^\intercal x \leq \frac12\norm{v}^2$, as well as a facet of $Q^\star$ via the inequality $v^\intercal x \leq 1$ (this indeed defines a facet, as a vertex $v$ of $Q$ always induces a corresponding facet of the polar $Q^\star$).
As $\norm{v} \geq \lambda_1(B_n,\Lambda)$, for every $c < \lambda_1(B_n,\Lambda)^2$, we have that $c \cdot Q^\star$ is contained in the interior of twice the Voronoi cell of $\Lambda^\star = \Lambda$, and hence contains no non-trivial dual lattice point.
Therefore, $\bar c(\Lambda) \geq \lambda_1(B_n,\Lambda)^2$.

Conway \& Thompson (see~\cite[Ch.~2, \S 9]{milnorhusemoller1973symmetric}) proved that there are self-dual lattices $\Lambda$ in~$\R^n$ with minimal norm
\[
\lambda_1(B_n,\Lambda) \geq \left\lfloor \frac{1}{\sqrt{\pi}}\left(\frac{5}{3}\Gamma\left(\frac{n}{2}+1\right)\right)^{\frac{1}{n}}\right\rfloor.
\]
Stirling's approximation then gives that $\bar c(\Lambda) \in \Omega(n)$.
\end{proof}

Based on the common belief that the best possible upper bound in~\eqref{eqn:flatness_thm} is linear in~$n$, we conjecture the following:

\begin{conj}
The compactness constants are linearly bounded, that is
\[
\bar c(\Lambda) \in \mathcal{O}(n) \qquad \text{and also} \qquad c(\Lambda) \in \mathcal{O}(n),
\]
for every lattice $\Lambda \subseteq \R^n$.
\end{conj}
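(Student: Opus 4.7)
The plan is to attack the conjecture in two separate pieces, one for $\bar c(\Lambda)$ and one for $c(\Lambda)$, since the relationship $c(\Lambda) \leq \tfrac{n}{2} \bar c(\Lambda)$ from Proposition~\ref{prop:c_vs_bar_c} alone cannot bridge the gap (combining it with a hypothetical $\bar c(\Lambda) \in \mathcal{O}(n)$ would yield only $\mathcal{O}(n^2)$).

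For $\bar c(\Lambda) \in \mathcal{O}(n)$, I would build on the identity $\bar c(\Lambda) = \lambda_n(Q^\star, \Lambda^\star)$ together with $\lambda_1(Q, \Lambda) = 1$, where $Q = \conv(\cF_\Lambda)$. The generic transference route via Banaszczyk~\eqref{eqn:flatness_thm} only delivers $\mathcal{O}(n \log n)$, so the plan is to exploit the special structure of $Q$ rather than treat it as an arbitrary symmetric convex body. Concretely, all vertices of $Q$ lie on the boundary of $2\vc_\Lambda$, come in centrally symmetric pairs, and the co-set characterization of Voronoi relevant vectors (used in the proof of Lemma~\ref{lem:vor-rel-vectors}) shows that $\cF_\Lambda$ hits every non-trivial class of $\Lambda / 2\Lambda$. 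I would try to translate this into a direct covering statement for $\R^n$ by $\mathcal{O}(n)$-translates of $Q$ along $\Lambda^\star$, along the lines of discrete Rogers-type covering estimates adapted to lattice polytopes, which would bound the covering radius of $Q^\star$ with respect to $\Lambda^\star$ by $\mathcal{O}(n)$.

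For $c(\Lambda) \in \mathcal{O}(n)$, the plan is to refine the inductive basis-completion step in the proof of Proposition~\ref{prop:c_vs_bar_c}. There, each augmentation $u_{k-1} \mapsto u_k$ pays an extra factor of $\tfrac{k-1}{2}$ because $u_k$ is only chosen within a translate of $\tfrac{k-1}{2} C_{k-1}$. I would instead pick $u_k$ as a shortest lift with respect to the quotient norm induced by $Q^\star$ on the one-dimensional quotient $\lin\{v_1, \ldots, v_k\} / \lin\{u_1, \ldots, u_{k-1}\}$, and try to show, using the $\lambda_1(Q, \Lambda) = 1$ property inherited by all full-rank sections, that a lift inside a constant multiple of $Q^\star$ is always available. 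Combined with the successive-minima strategy of Theorem~\ref{thm:cn-upper-bound} applied to $\vc_\Lambda$ directly (rather than to its polar via Lemma~\ref{lem:MinkowskiBoundForOneDirection}), this would aim to replace the quadratic accumulation by a linear one.

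The main obstacle is clearly the first step: sharpening Banaszczyk's transference theorem to linear order is itself a central open problem in the geometry of numbers, and no known technique achieves it in the required generality. Any honest attempt must therefore genuinely use features specific to $Q^\star$ (its vertices being dual to $\Lambda$-primitive Voronoi vectors, the partition of $\Lambda$ into $2\Lambda$-co-sets, and the fact that $\vc_\Lambda$ has at most $2(2^n - 1)$ facets) rather than treat $Q^\star$ as a black-box symmetric body. The second step is conditionally easier but still requires a genuinely new construction of a short dual basis inside $\bar c(\Lambda) \cdot Q^\star$, beyond the Minkowski-type arguments available so far.
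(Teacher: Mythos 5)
This statement is a \emph{conjecture} in the paper, not a theorem: the authors offer no proof, only the motivation that the $n\log n$ in the transference bound~\eqref{eqn:flatness_thm} is widely believed to be improvable to linear. There is therefore no proof in the paper to compare your attempt against. Your proposal is, as you yourself flag, not a proof but a research plan, and your diagnosis of where it breaks down is accurate: establishing $\bar c(\Lambda) \in \mathcal{O}(n)$ via $\lambda_1(Q,\Lambda)\,\lambda_n(Q^\star,\Lambda^\star)$ would, in the generality in which you invoke it, amount to proving a linear transference bound for the particular bodies $Q=\conv(\cF_\Lambda)$, and you correctly note that nothing currently known achieves this. You also correctly observe that the inequality $c(\Lambda) \leq \tfrac{n}{2}\,\bar c(\Lambda)$ from Proposition~\ref{prop:c_vs_bar_c} cannot bootstrap a linear bound for $\bar c$ into a linear bound for $c$, so the two halves of the conjecture would need separate arguments.

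Two smaller cautions about the sketch itself. First, the structural facts you list about $Q$ (vertices on $\partial(2\vc_\Lambda)$, one representative per nonzero coset of $\Lambda/2\Lambda$, at most $2(2^n-1)$ facets) are all correct, but they are also available to the Banaszczyk-style machinery and have not, to date, been leveraged to beat $n\log n$; you would need a genuinely new mechanism, not just a restriction of the class of bodies. Second, in the refinement of the basis-completion step you appeal to ``$\lambda_1(Q,\Lambda)=1$ inherited by all full-rank sections'' — this is not automatic: sections and projections of $Q$ do not in general coincide with the $\conv(\cF)$-body of the corresponding sublattice or quotient lattice, so the inheritance claim would itself require proof. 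In short, the gap is real and is exactly the one you name; the proposal is an honest research outline rather than a proof.
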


\section{Algorithmic point of view}
\label{sec:algorithmic}

When it comes to computing a $c(\Lambda)$-compact basis for $\Lambda$, not much is known.
Lemma~\ref{lem:primal_dual} suggests to take the polar of $\conv (\cF_\Lambda)$, and then to look for a dual basis in a suitable dilate thereof.
However, in order to do this, we need a description of the Voronoi relevant vectors in the first place.
Even if we are only interested in an $(n \cdot c(\Lambda))$-compact basis, it is not clear how to benefit from the allowed slack.

In the following, we rather discuss how to incorporate an already known $c$-compact basis into the algorithm of Micciancio \& Voulgaris~\cite{micciancio2013deterministic}.

\subsection*{The Micciancio-Voulgaris algorithm}

The algorithm consists of two main parts.
In a preprocessing step, it computes the Voronoi cell $\vc_\Lambda$, which can be done in time $2^{\mathcal{O}(n)}$ in a recursive manner.
As a $c$-compact basis already grants a superset of $\cF_\Lambda$, we do not recall the details of this first part.

Once the Voronoi cell $\vc_\Lambda$ is computed, a vector $p \in \Lambda$ is closest to $t$ if and only if $t-p \in \vc_\Lambda$.
Bearing this in mind, the idea is to iteratively subtract lattice vectors from $t$ until the condition holds.

But why do we only need $2^{\mathcal{O}(n)}$ iterations?
Let us assume for now that~$t$ is already rather close to $0$, say $t \in 2 \, \vc_\Lambda$.
Let $p$ be a Voronoi relevant vector whose induced facet-defining inequality is violated by $t$, this means $p^\intercal t > \frac12 \norm{p}^2$.
Micciancio \& Voulgaris show that $t-p$ is still contained in $2 \, \vc_\Lambda$, and is strictly shorter than~$t$.
Hence, for going from $t \in 2 \, \vc_\Lambda$ to some $t^\prime = t - w \in \vc_\Lambda$, for $w \in \Lambda$, the number of iterations we need is bounded by the number of level sets of the norm function that have a point in $2 \, \vc_\Lambda \cap (t + \Lambda)$.
This number turns out to be at most~$2^n$.

If $t$ is further away, that is $t \notin 2 \, \vc_\Lambda$, let $k$ be the smallest integer such that $t \in 2^k \vc_\Lambda$.
Then, we can apply the above method to the lattice $\Lambda' = 2^{k-1} \Lambda$, and find $w \in \Lambda' \subseteq \Lambda$ such that $t-w \in \vc_{\Lambda'} = 2^{k-1} \vc_\Lambda$.
Doing this recursively yields that after $2^n k$ iterations, we moved $t$ into $\vc_\Lambda$.
Note that $k$ is polynomial in the input size.
More sophisticated arguments allow to limit $k$ in terms of $n$ only, or to decrease the number of iterations to weakly polynomial, as presented in~\cite{bonifasdadush2015short}.

\begin{cor}
\label{cor:efficientWcompact}
Assume we are given a $c$-compact basis $B$ of a lattice $\Lambda \subseteq \R^n$.
For any target point $t \in \R^n$, a closest lattice vector to $t$ can be found in time $\mathcal{O} ((2c+1)^n \, 2^n \poly(n))$ and space polynomial in the input size.
\end{cor}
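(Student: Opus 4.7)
The plan is to re-run the Micciancio--Voulgaris (MV) algorithm while never materialising the set $\cF_\Lambda$ of Voronoi relevant vectors. By $c$-compactness, $\cF_\Lambda$ is contained in the implicit superset
\[
S \;:=\; \{Bz : z \in \Z^n,\ \|z\|_\infty \leq c\}, \qquad |S| = (2c+1)^n,
\]
which we never write down but enumerate on the fly: for any index $z \in \{-c,\ldots,c\}^n$, the corresponding vector $Bz$ is computed from $B$ and $z$ in polynomial time. Hence one sweep through $S$ uses $(2c+1)^n \poly(n)$ time and only polynomial working space beyond $B$ and $t$.

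The MV algorithm, as reviewed above, finds the smallest $k$ with $t \in 2^k \vc_\Lambda$ and then performs $2^n \poly(n)$ elementary steps, each of which reduces to the following oracle: given the current residual $t'$ and the currently relevant scaled lattice $\Lambda' = 2^j \Lambda$, return some $p \in \cF_{\Lambda'}$ with $2 p^\intercal t' > \|p\|^2$, or assert that no such $p$ exists. Since $2^j B$ is itself a $c$-compact basis of $\Lambda'$, the dilated superset $S_j := 2^j S$ contains $\cF_{\Lambda'}$, so every oracle call is answered by a single pass through $S_j$, testing the inequality for each candidate. Combining the $2^n \poly(n)$ iteration count with the $(2c+1)^n \poly(n)$ cost per oracle call yields the claimed running time $\mathcal{O}((2c+1)^n \cdot 2^n \poly(n))$; the only state we keep across iterations---$B$, $t$, the current residual, the accumulated lattice vector, an enumeration index, and a currently best candidate---is of polynomial size, so the space bound is polynomial as well.

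The delicate point that needs the most attention is ensuring that when a sweep of $S_j$ produces a violator $p$, it is genuinely \emph{facet}-defining for $\vc_{\Lambda'}$, rather than a non-Voronoi-relevant element of $S_j \setminus \cF_{\Lambda'}$ whose inequality happens to be violated at $t'$. This is where the MV invariant that the residual remains inside $2\vc_{\Lambda'}$ can break down. It can be enforced by applying Voronoi's coset criterion to the selected candidate---verifying in polynomial time that $\pm p$ are uniquely shortest in the coset $p + 2\Lambda'$ by Babai-style rounding in the basis $2B$ of $2\Lambda'$---so that the per-oracle-call budget $(2c+1)^n \poly(n)$ is preserved. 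With this detail in place the MV correctness analysis transfers verbatim, and the iteration bound from the refined analysis cited in the preceding paragraph closes the argument.
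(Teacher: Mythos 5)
Your first two paragraphs mirror the paper's argument exactly: enumerate the superset $S=\{Bz : \|z\|_\infty\le c\}$ on the fly inside the Micciancio--Voulgaris machinery, never materialising it, which gives the $(2c+1)^n\,2^n\poly(n)$ time and polynomial-space bounds. The paper simply cites~\cite[Thm.~4.2, Rem.~4.3, Rem.~4.4]{micciancio2013deterministic}, where it is established that the MV algorithm works with \emph{any} explicitly given superset $V\supseteq\cF_\Lambda$, in time $\mathcal{O}(|V|\cdot 2^n\poly(n))$ and space proportional to the size of $V$'s description.

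Your third paragraph, however, introduces a gap where there is none and then proposes a fix that does not work. The ``delicate point'' you raise---that a non-relevant violator from $S_j$ might break the invariant $t'\in 2\vc_{\Lambda'}$---is precisely what~\cite[Rem.~4.4]{micciancio2013deterministic} addresses: the MV correctness analysis goes through for any superset of $\cF_\Lambda$, with no need to filter out non-facet-defining vectors. More seriously, the remedy you suggest---``verifying in polynomial time that $\pm p$ are uniquely shortest in the coset $p+2\Lambda'$ by Babai-style rounding''---is not actually a polynomial-time verification. Deciding whether $\pm p$ are the unique shortest vectors of $p+2\Lambda'$ is a coset shortest-vector problem, essentially as hard as the CVP instance you are trying to solve, and Babai rounding only delivers a $2^{\Theta(n)}$-approximate answer; it certifies nothing about uniqueness or optimality. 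Had this filtering step been required, your proof would be broken. Since it is not required, the argument is correct once you drop that paragraph and instead invoke~\cite[Rem.~4.4]{micciancio2013deterministic} as the paper does.
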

\begin{proof}
Theorem 4.2 and Remark 4.4 in~\cite{micciancio2013deterministic} state that a closest vector can be found in time $\mathcal{O} (\vert V \vert \cdot 2^n \poly(n))$, where $V$ is a superset of the Voronoi relevant vectors~$\cF_\Lambda$.
We set $V = \left\{ Bz : z \in \Z^n, \, \| z \|_\infty \leq c \right\} \supseteq \cF_\Lambda$.

The reduction to polynomial space follows from~\cite[Rem.~4.3]{micciancio2013deterministic}:
Their algorithm may need exponential space because they store $\cF_\Lambda$.
As a subset of $V$ it is however described just by the polynomial-size data $(B,c)$.
\end{proof}

The Micciancio-Voulgaris algorithm naturally can be presented as an algorithm for the 
Closest Vector Problem with Preprocessing (CVPP).
In this variant of CVP, we may precompute the lattice for an arbitrary amount of time and store some additional information.
Only then the target vector is revealed to us, and we are allowed to use the information we gathered before to speed up the process of finding a closest vector.
This is motivated by the fact that in practice, we might have to compute the closest vector for several target vectors, but always on the same lattice.
Hence, we happily spend more time for preprocessing, when we are able to vastly benefit from the additional information.

Considered in this setting, our results compress the information after the preprocessing step into polynomial space.
However, it is unclear how to compute a $c(\Lambda)$-compact basis \emph{without} computing the Voronoi cell first.

\begin{prob}
Can we compute a basis $B$ of $\Lambda$ that attains $c(\Lambda)$ in single-exponential time and polynomial space?
\end{prob}

The fact that every zonotopal lattice has a compact basis is especially interesting.
McCormick, Peis, Scheidweiler \& Vallentin can solve the Closest Vector Problem in polynomial time on a zonotopal lattice, provided it is given in a certain format.\footnote{At the time of writing there is no preprint available (personal communication with Frank Vallentin).
}
Another related result is due to McKilliam, Grant \& Clarkson~\cite{mckilliam2014finding}, who provide a polynomial time algorithm for lattices of Voronoi's first kind, provided an obtuse superbasis is known.
One could wonder whether our representation also allows for solving CVPP faster (measuring only the time after the preprocessing).
However, McKilliam et al.\ use additional combinatorial properties of an obtuse superbasis that are in general not even fulfilled for a $1$-compact basis.
In fact, Micciancio~\cite{micciancio2001hardness} showed that if CVPP can be solved in polynomial time for arbitrary lattices, then $\mathrm{NP} \subseteq \mathrm{P/poly}$ and the polynomial hierarchy collapses.


\subsection*{Acknowledgments}
We thank Daniel Dadush and Frank Vallentin for helpful remarks and suggestions.
In particular, Daniel Dadush pointed us to the arguments in Theorem~\ref{thm:cn-upper-bound} that improved our earlier estimate of order $\mathcal{O}(n^2 \log{n})$.
We are grateful to the referees for their valuable suggestions, questions, and comments.

\bibliographystyle{amsplain}
\bibliography{bibliography}

\end{document}